\documentclass{ifacconf}

\usepackage[normalem]{ulem}
\usepackage{graphicx}      
\usepackage{natbib}        
\usepackage{amsmath,amssymb,amsfonts}
\usepackage{enumitem}
\usepackage{fancyhdr}
\usepackage{graphicx}
\usepackage{setspace}
\usepackage{epstopdf}
\usepackage{color}
\usepackage{booktabs}
\usepackage{multirow}
\usepackage{longtable}
\usepackage{rotating}
\usepackage{colortbl}
\usepackage{threeparttable}
\usepackage{arydshln}
\usepackage{mathtools}
\usepackage{booktabs}
\usepackage{makecell}
\usepackage{xcolor}
\allowdisplaybreaks[2]
\newtheorem{rmk}{Remark}
\newtheorem{assumption}{Assumption}

\newtheorem{proposition}{Proposition}

\newtheorem{definition}{Definition}
\allowdisplaybreaks[2]
\begin{document}
\begin{frontmatter}

\title{Learning-based data-enabled moving horizon estimation with application to membrane-based biological wastewater treatment process\thanksref{footnoteinfo}} 

\thanks[footnoteinfo]{This research is supported by Ministry of Education, Singapore, under its Academic Research Fund Tier 1 (RG95/24 \& RG63/22).\\
$^{1}$ \hspace{0.5mm}Corresponding author: Xunyuan Yin. Tel: (+65) 6316 8746. Email: xunyuan.yin@ntu.edu.sg.\\
}

\author[First]{Xiaojie Li} 
\author[First,Second]{Xunyuan Yin}$^{,1}$
\address[First]{School of Chemistry, Chemical Engineering and Biotechnology, Nanyang Technological University, 62 Nanyang Drive, 637459, Singapore (e-mail: xiaojie002@e.ntu.edu.sg, xunyuan.yin@ntu.edu.sg)}
\address[Second]{Nanyang Environment and Water Research Institute, Nanyang Technological University, 1 CleanTech Loop, 637141, Singapore}

\begin{abstract}                
{\color{black}
In this paper, we propose a data-enabled moving horizon estimation (MHE) approach for a class of nonlinear systems without explicit modeling, by leveraging Koopman operator theory and Willems’ fundamental lemma. Specifically, the nonlinear system is lifted to a linear parameter-varying Koopman surrogate, in which the lifting functions and scheduling mappings are learned directly from data using neural networks. Willems’ fundamental lemma is then employed to construct a trajectory-based representation of the Koopman surrogate, which bypasses the explicit identification of the matrices of the Koopman surrogate. Based on this representation, we formulate a convex data-enabled MHE design, which provides real-time estimates of the Koopman surrogate states, from which the states of the original nonlinear system are reconstructed.}
Sufficient conditions are derived to ensure the stability of the estimation error. The effectiveness of the proposed method is illustrated using a simulated membrane-based biological wastewater treatment process.
\end{abstract}

\begin{keyword}
Data-enabled state estimation, moving horizon estimation, nonlinear system, membrane-based biological wastewater treatment process
\end{keyword}

\end{frontmatter}

\section{Introduction}
Full-state information is critical for implementing various advanced control designs (\cite{allgower2012nonlinear}, \cite{lee2014progress}). However, for complex nonlinear systems, obtaining such information in real time is often difficult or prohibitively expensive (\cite{bourgeois2001line}, \cite{turan2021data}). Nonlinear state estimation offers a practical alternative by reconstructing the full states from limited output measurements (\cite{kottakki2014state}, \cite{shao2010constrained}). Among available approaches, moving horizon estimation (MHE) is one of the well-suited frameworks to nonlinear processes, since it can explicitly handle constraints and nonlinearity by incorporating them into an online optimization problem (\cite{schiller2023lyapunov}, \cite{ji2015robust}).

The performance of MHE typically depends on the availability of an accurate dynamic model of the underlying system. When first-principles models are accessible or when data-driven models are obtained through system identification, MHE can be formulated based on such models (\cite{schiller2023lyapunov}, \cite{ji2015robust}, \cite{yin2023data}, \cite{yan2025self}). 
A promising alternative is to develop data-enabled MHE schemes that bypass explicit modeling and directly construct the estimator from data (\cite{wolff2024robust}).
Willems' fundamental lemma (\cite{willems2005note}), which enables implicit representation of a linear time-invariant (LTI) system using its trajectories, has emerged as a powerful foundation for direct data-enabled control (\cite{berberich2020trajectory}, \cite{morato2024data}) and direct data-enabled state estimation (\cite{wolff2024robust}, \cite{turan2021data}). 

Willems' fundamental lemma (\cite{willems2005note}) has been exploited for state estimation. Within this direct data-enabled framework, a Luenberger observer was developed in \cite{adachi2021dual}. To address the unknown inputs and disturbances, an unknown input observer integrating Willems' fundamental lemma was proposed in \cite{turan2021data}.
A data-enabled MHE approach capable of handling state constraints was proposed in \cite{wolff2024robust} to achieve robustness against measurement noise in offline data.
{\color{black}Despite these advances, the available theoretical results remain restricted to linear systems, because Willems' fundamental lemma (\cite{willems2005note}) is applicable only to LTI systems. To the best of our knowledge, MHE methods based on Willems’ fundamental lemma (\cite{willems2005note}) for the nonlinear systems have not yet been systematically investigated.}

{\color{black}
Motivated by the above gap, we aim to propose a data-enabled MHE approach for a class of nonlinear systems based on Willems’ fundamental lemma (\cite{willems2005note}). 
Specifically, we assume the existence of an exact linear parameter-varying (LPV) Koopman surrogate for the nonlinear systems in a lifted space, together with a reconstruction matrix that maps lifted states back to the original system states. 
By extending Willems' fundamental lemma (\cite{willems2005note}) to the LPV setting, we obtain an implicit trajectory-based representation of the Koopman surrogate for the considered nonlinear systems.  This representation eliminates the need for explicit identification of the system matrices of the Koopman surrogate. Within this framework, the lifting function and the scheduling mapping required for the Koopman surrogate are learned from data using neural networks. Based on the resulting trajectory-based Koopman surrogate, we formulate a data-enabled MHE scheme in the lifted space.}
The proposed data-enabled MHE scheme estimates the states of the Koopman surrogate, based on which the original states of the nonlinear system are reconstructed. Additionally, we establish the stability of the proposed data-enabled MHE method.

{\color{black}
As a representative application of nonlinear state estimation, we further consider wastewater treatment processes, in which some key water-quality variables are typically not directly measurable online.  In such processes, certain contaminants, such as norfloxacin (\cite{li2025efficient}, \cite{xie2024efficient}) and p-nitrophenol (\cite{mei2019simultaneous}), if present in the effluent, can pose risks to the environment and public health and need to be effectively removed. However, the concentrations of these contaminants are typically not directly measurable and should be inferred from indirect measurements, such as UV-Vis spectroscopy and voltammetric response signals (\cite{li2025efficient}, \cite{xie2024efficient}, \cite{lu2024facile}). At the same time, the underlying biochemical reactions and transport phenomena in these processes are highly complex (\cite{zhao2023anhydrous}, \cite{liu2025opportunities}), which makes it difficult to develop accurate first-principles models for state estimation. These characteristics make wastewater treatment a well-suited application for the proposed data-enabled MHE framework, which bypasses explicit system identification and conducts state estimation directly from data. Accordingly, we apply the proposed data-enabled MHE to a simulated membrane-based biological wastewater treatment process proposed in \cite{guo2020nonlinear}.
}

\section{Preliminaries}
\subsection{Notation} 
$\mathbb{N}$ represents the set of non-negative integers. $\mathbb{N}_{[a,b]}:=\mathbb{N}\cap[a,b]$ denotes the set of non-negative integers within the interval $[a,b]$. $Q^{\dag}$ denotes the Moore-Penrose pseudoinverse of matrix $Q$. For a vector $z$, $\|z\|$ is the Euclidean norm; $\|z\|_{Q}^{2}:=z^{\top}Q z$ denotes the square of the weighted Euclidean norm; $\|z\|_{\infty}:=\max_{i}|z_{i}|$ represents the infinity norm. 
{\color{black}For a sequence $z_{k}$, $k\in\mathbb{N}_{[a,b]}$, the stacked vector $z_{[a,b]}$ is defined as $z_{[a,b]}:=[z_{a}^{\top},\ldots,z_{b}^{\top}]^{\top}$. The variable $\tilde{z}_{k}$ represents the noisy measurement of $z_{k}$.
The superscript ``o" denotes offline data, while variables without this superscript denote online data.}
Let $\mathbb{R}_{+}:=[0,\infty)$,
a function $\kappa:\mathbb{R}_{+}\rightarrow\mathbb{R}_{+}$ belongs to class $\mathcal{K}$ if it is continuous, strictly increasing, and satisfies $\kappa(0)=0$. 
The Hankel matrix of depth $N$ associated with $z_{[a,b]}$ is defined as
\begin{equation*}
    \mathcal{H}_{N}(z_{[a,b]}):=\left[\begin{array}{cccc}
        z_{a} & z_{a+1} & \ldots & z_{b-N+1} \\
        z_{a+1} & z_{a+2} & \ldots & z_{b-N+2} \\
       \vdots & \vdots & \ddots & \vdots\\
       z_{a+N-1} & z_{a+N} & \ldots & z_{b} \\
    \end{array}\right]
\end{equation*}
\subsection{System description}
We consider a class of discrete-time nonlinear systems of the following form:
    \begin{equation}\label{eq:sys}
        x_{k+1}=f(x_{k},u_{k}),\quad
        y_{k}=Cx_{k}
    \end{equation}
where $x_{k}\in\mathcal{X}\subseteq\mathbb{R}^{n_{x}}$ is the vector of system states;
$u_{k}\in\mathcal{U}\subseteq\mathbb{R}^{n_{u}}$ is the vector of known inputs; $y_{k}\in\mathbb{R}^{n_{y}}$ denotes the vector of measured outputs; $f:\mathcal{X}\times\mathcal{U}\rightarrow\mathcal{X}$ is a continuous, nonlinear function characterizing the dynamics of {\color{black}the} system \eqref{eq:sys}; $C\in\mathbb{R}^{n_{y}\times n_{x}}$ is the output matrix; $\mathcal{X}$ and $\mathcal{U}$ are compact convex sets.

The nonlinear function $f$ and output matrix $C$ of {\color{black}the} system \eqref{eq:sys} are unknown. Instead, only data trajectories of the system in \eqref{eq:sys} are available. We consider a two-stage state estimation in the data-enabled framework, which consists of an offline stage for data collection and an online stage for real-time state estimation (\cite{wolff2024robust}). Specifically, in the offline stage, trajectories of noise-free inputs 
{\color{black}$u^{o}_{[0,T-1]}$}, noisy states {\color{black}$\tilde{x}^{o}_{[0,T]}$}, and noisy measured outputs {\color{black}$\tilde{y}^{o}_{[0,T]}$} are collected. We consider that the state and output measurements are contaminated by unknown but bounded measurement noise (\cite{wolff2024robust}):
\begin{align*}
    \tilde{x}^{o}_{k}=x^{o}_{k}+\varepsilon_{x,k}^{o},\quad
    \tilde{y}^{o}_{k}=y^{o}_{k}+\varepsilon_{y,k}^{o}
\end{align*}
where $x^{o}_{k}$ and $y^{o}_{k}$ are noise-free states and measured outputs of {\color{black}the} system \eqref{eq:sys}, respectively; $\varepsilon_{x,k}^{o}$ and $\varepsilon_{y,k}^{o}$ are bounded state and output measurement noise, respectively, satisfying $\|\varepsilon_{x,k}^{o}\|\leq\bar{\varepsilon}_{x}$ and $\|\varepsilon_{y,k}^{o}\|\leq\bar{\varepsilon}_{y}$, $\forall k\in\mathbb{N}$.
In the online stage, system states are no longer measurable and need to be estimated using available input and output data. 

{\color{black}
\begin{rmk}\label{rmk:1}
    In practice, full-state information can be obtained offline using laboratory analyzers or high-fidelity sensors that are unavailable during real-time operation due to time and budget limitations. For example, in chemical and biological processes, substance concentrations can be obtained by laboratory analyzers (\cite{bourgeois2001line}, \cite{nicoletti2009computational}). However, obtaining such measurements is typically time-consuming and costly, making real-time acquisition of these variables impractical. In autonomous driving, the full state of vehicles can be measured through high-fidelity sensors offline or during testing phases (\cite{turan2021data}, \cite{wolff2024robust}). However, the high cost of these sensors can limit their deployment in series production (\cite{turan2021data}).
\end{rmk}}

\subsection{LPV Koopman surrogate of controlled systems}\label{sec:koopman}
Koopman operator theory (\cite{koopman1931hamiltonian}) was originally developed for {\color{black}the} autonomous systems, which provides a promising framework for describing the dynamics of {\color{black}the} nonlinear systems using a linear representation by lifting the original system states into a higher-dimensional space. For {\color{black}the} nonlinear controlled systems in \eqref{eq:sys}, an exact Koopman model has the following form (\cite{iacob2024koopman}):
\begin{equation}\label{eq:koopman}
z_{k+1}=Az_{k}+\phi(x_{k},u_{k})u_{k}
\end{equation}
where $z_{k}=\psi(x_{k})\in\mathbb{R}^{n_{z}}$ denotes the state vector in the lifted space; $\psi:\mathbb{R}^{n_{x}}\rightarrow\mathbb{R}^{n_{z}}$ is the state lifting function; $\phi:\mathbb{R}^{n_{x}}\times\mathbb{R}^{n_{u}}\rightarrow\mathbb{R}^{n_{z}\times n_{u}}$ describes the input matrix in the lifted space. 
According to \cite{iacob2024koopman}, by performing the factorization method, the Koopman model in \eqref{eq:koopman} can be rewritten in a linear parameter-varying (LPV) form:
\begin{equation}\label{eq:koopman_p}
    z_{k+1}=Az_{k}+\beta(p_{k})u_{k}
\end{equation}
where $p_{k}=\lambda(z_{k},u_{k})\in\mathbb{R}^{n_{p}}$ is the scheduling parameter with scheduling map $\lambda:\mathbb{R}^{n_{z}}\times\mathbb{R}^{n_{u}}\rightarrow\mathbb{R}^{n_{p}}$; {\color{black}$\beta:\mathbb{R}^{n_{p}}\rightarrow\mathbb{R}^{n_{z}\times n_{u}}$ is the input matrix satisfying $\beta\circ\lambda=\phi$ where $\circ$ denotes function composition.}

\subsection{Problem formulation}
{\color{black}
The objective of this paper is to develop a data-enabled state estimation method that provides real-time state estimates for the nonlinear systems in \eqref{eq:sys} without explicitly constructing a dynamic model.
Specifically, we consider an LPV Koopman model (\cite{iacob2024koopman}) for {\color{black}the} nonlinear system \eqref{eq:sys} in a lifted state space.
Instead of explicitly identifying this model, we construct its trajectories directly from data generated by the nonlinear system \eqref{eq:sys}. Based on these trajectories, a data-enabled convex optimization-based MHE scheme is formulated to estimate the full state of the system in real time.}

\section{Trajectory-based representation of surrogate}

In this work, we consider $\beta(p_{k})$ in \eqref{eq:koopman_p} to have affine dependence on $p_{k}$ (\cite{toth2011state}): 
\begin{equation}\label{eq:affine}
    \beta(p_{k})=B_{0}+\sum_{i=1}^{n_{p}}p_{k,i}B_{i}
\end{equation}
where $p_{k,i}$ denotes the $i$th element of $p_{k}$ and $B_{i}\in\mathbb{R}^{n_{z}\times n_{u}}$ for $i\in\mathbb{N}_{[1,n_{p}]}$. By denoting $\tilde{B}=[B_{1},\ldots,B_{n_{p}}]$, the LPV representation in \eqref{eq:koopman_p} can be reformulated as 
\begin{equation}\label{eq:lpv}
    z_{k+1}=Az_{k}+B_{0}u_{k}+\tilde{B}(p_{k}\otimes u_{k})
\end{equation}
{\color{black}where $\otimes$ denotes the Kronecker product.}
Define $B=[B_{0},\tilde{B}]$, $v_{k}:=p_{k}\otimes u_{k}$, and $\mathbf{u}_{k}=[u_{k}^{\top},v_{k}^{\top}]^{\top}$. Before proceeding further, we introduce the necessary definition and assumptions for deriving Theorem \ref{thm:fundamental}.

\begin{definition}(\cite{willems2005note})
{\color{black}$u_{[0,T-1]}$} is persistently exciting of order $N$, if $\mathrm{rank}(\mathcal{H}_{N}(u_{[0,T-1]}))=Nn_{u}$.
\end{definition}

\begin{assumption}\label{ass:controllable}
    The pair $(A,B)$ is controllable and the offline augmented input sequence {\color{black}$\mathbf{u}_{[0,T-1]}$} is persistently exciting of order $N+n_{z}+1$.
\end{assumption}

\begin{assumption}\label{ass:exact}
    There exists an exact LPV Koopman surrogate \eqref{eq:lpv} for {\color{black}the} nonlinear system \eqref{eq:sys}.
\end{assumption}

\begin{assumption}\label{ass:D}
    There exists a reconstruction matrix $D\in\mathbb{R}^{n_{x}\times n_{z}}$ 
    for \eqref{eq:lpv},
    such that:
    \begin{equation}\label{eq:D}
        x_{k}=D\psi(x_{k}),~k\in\mathbb{N}
    \end{equation}
\end{assumption}

{\color{black} Assumption \ref{ass:controllable} characterizes the richness of the collected data, which ensures that the collected data are sufficiently rich to capture the system dynamics.}
Following \cite{xiong2025data}, Assumption \ref{ass:exact} requires the existence of an exact Koopman surrogate of \eqref{eq:sys}. Based on \cite{iacob2025learning}, Assumption \ref{ass:D} requires that states $x_{k}$ of \eqref{eq:sys} can be reconstructed from states $z_{k}=\psi(x_{k})$ of Koopman surrogate \eqref{eq:lpv}. Note that Assumptions \ref{ass:exact}-\ref{ass:D} are only utilized in Theorem \ref{thm:fundamental} to establish a trajectory-based representation of \eqref{eq:lpv}; these assumptions are not required for the stability analysis in Theorem \ref{thm:stability}.

Under Assumption \ref{ass:D}, it follows from \eqref{eq:sys} and \eqref{eq:D} that
\begin{equation}\label{eq:CD}
y_{k} = CD\psi(x_{k})
\end{equation}

In the following, we consider the Koopman surrogate consisting of \eqref{eq:lpv} and \eqref{eq:CD}, which has the same inputs and outputs as the original nonlinear system \eqref{eq:sys}. 


{\color{black}To establish Theorem \ref{thm:fundamental}, we impose the assumptions stated in Assumptions \ref{ass:controllable}-\ref{ass:D}. Specifically, the augmented system \eqref{eq:lti} is assumed to be controllable, and the offline augmented input sequence is persistently exciting of sufficient order. Moreover, we assume the existence of an exact Koopman surrogate for the nonlinear system in \eqref{eq:sys}, as well as the existence of a reconstruction matrix that maps the lifted state to the original state space. For notational convenience, define $\boldsymbol{\psi}(x^{o}_{[0,T-1]})=[\psi(x^{o}_{0})^{\top},\ldots,\psi(x^{o}_{T-1})^{\top}]^{\top}$.}

\begin{thm}\label{thm:fundamental}
Consider an input/output/state trajectory, denoted by {\color{black}$u^{o}_{[0,T-1]}$, $y^{o}_{[0,T-1]}$, and $x^{o}_{[0,T-1]}$}, of {\color{black}the} nonlinear system \eqref{eq:sys} and a scheduling parameter trajectory {\color{black}$p^{o}_{[0,T-1]}$} of its corresponding LPV Koopman surrogate (i.e., \eqref{eq:lpv} and \eqref{eq:CD}). If Assumptions \ref{ass:controllable}-\ref{ass:D} hold, then the following holds:
\begin{itemize}
    \item[$\mathrm{(i)}$] 
        $\mathrm{rank}\left(\left[\begin{array}{c}
             \mathcal{H}_{1}({\color{black}\boldsymbol{\psi}(x^{o}_{[0,T-N])}})  \\
             \mathcal{H}_{N}(u^{o}_{[0,T-1]})\\
             \mathcal{H}_{N}(v^{o}_{[0,T-1]})
        \end{array}\right]\right)=n_{z}+Nn_{u}(1+ n_{p})$
    \item[$\mathrm{(ii)}$] 
    {\color{black}$u^{o}_{[0,T-1]}$, $y^{o}_{[0,T-1]}$, and $p^{o}_{[0,T-1]}$}
    form a trajectory of \eqref{eq:lpv} and \eqref{eq:CD}, if and only if there exists $\alpha\in\mathbb{R}^{T-N+1}$ such that 
    \begin{equation}\label{eq:fundamental_uy}
        \left[\begin{array}{c}
             \mathcal{H}_{N}(u^{o}_{[0,T-1]})  \\
             \mathcal{H}_{N}(v^{o}_{[0,T-1]})  \\
             \mathcal{H}_{N}(y^{o}_{[0,T-1]})
        \end{array}\right] \alpha= \left[\begin{array}{c}
             u_{[0,N-1]}  \\
             v_{[0,N-1]}  \\
             y_{[0,N-1]}
        \end{array}\right]
    \end{equation} 
\end{itemize}
\end{thm}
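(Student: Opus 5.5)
The plan is to reduce the statement to the classical Willems' fundamental lemma (\cite{willems2005note}) applied to an auxiliary LTI system. Treat $\mathbf{u}_{k}=[u_{k}^{\top},v_{k}^{\top}]^{\top}$ as a fictitious exogenous input. Under Assumption \ref{ass:exact}, the lifted state $z_{k}=\psi(x_{k})$ of the nonlinear system \eqref{eq:sys} then evolves exactly according to
\begin{equation*}
z_{k+1}=Az_{k}+B\mathbf{u}_{k},\qquad y_{k}=CDz_{k},
\end{equation*}
where the output equation follows from Assumption \ref{ass:D} through \eqref{eq:CD}. This is an LTI system with state dimension $n_{z}$ and input dimension $n_{u}(1+n_{p})$. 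The offline data $u^{o}$, $v^{o}=p^{o}\otimes u^{o}$, $\psi(x^{o})$, $y^{o}$ therefore form an input/state/output trajectory of it. Every statement about \eqref{eq:lpv} and \eqref{eq:CD} can be read as a statement about this LTI system, with the understanding that the entries of $v$ are tied to $p$ and $u$ only through the Kronecker structure.

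For part (i), I would invoke the state-space version of the fundamental lemma. The pair $(A,B)$ is controllable and $\mathbf{u}^{o}_{[0,T-1]}$ is persistently exciting of order $N+n_{z}$ by Assumption \ref{ass:controllable}; order $N+n_{z}+1$ implies order $N+n_{z}$. Hence the matrix stacking $\mathcal{H}_{1}(\boldsymbol{\psi}(x^{o}_{[0,T-N]}))$ and $\mathcal{H}_{N}(\mathbf{u}^{o}_{[0,T-1]})$ has full row rank $n_{z}+Nn_{u}(1+n_{p})$. The rows of $\mathcal{H}_{N}(\mathbf{u}^{o})$ are a permutation of the rows of the stacked pair $\mathcal{H}_{N}(u^{o})$, $\mathcal{H}_{N}(v^{o})$, and a row permutation does not change the rank, so (i) follows. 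If a self-contained argument is preferred, I would reproduce Willems' proof. Suppose a left-kernel vector $[\xi^{\top},\eta^{\top}]$ exists. Using $z_{k+1}=Az_{k}+B\mathbf{u}_{k}$, one shifts the Hankel structure up to depth $N+n_{z}$. Cayley--Hamilton then lets one eliminate the state rows, which produces a nontrivial left-kernel vector of $\mathcal{H}_{N+n_{z}}(\mathbf{u}^{o})$ unless the kernel vector vanishes; controllability is what rules out the degenerate case. That contradicts persistency of excitation.

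For part (ii), \emph{sufficiency}: if $\alpha$ satisfies \eqref{eq:fundamental_uy}, set $z_{0}:=\mathcal{H}_{1}(\boldsymbol{\psi}(x^{o}_{[0,T-N]}))\alpha$. By linearity, every column of the Hankel matrices is an LTI trajectory, so the combination is one as well. Explicitly, the combined state sequence satisfies $z_{k+1}=Az_{k}+Bu$-type recursion with inputs $u_{k}$, $v_{k}$ and outputs $y_{k}=CDz_{k}$. With $v_{k}=p_{k}\otimes u_{k}$ this is exactly a trajectory of \eqref{eq:lpv} and \eqref{eq:CD}. \emph{Necessity}: given a trajectory with initial lifted state $z_{0}$, part (i) guarantees a vector $\alpha$ solving the system made of $\mathcal{H}_{1}(\boldsymbol{\psi})\alpha=z_{0}$, $\mathcal{H}_{N}(u^{o})\alpha=u_{[0,N-1]}$ and $\mathcal{H}_{N}(v^{o})\alpha=v_{[0,N-1]}$, because that matrix has full row rank. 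The state and outputs are then uniquely determined by $z_{0}$ and the inputs, so the output block of \eqref{eq:fundamental_uy} holds automatically.

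The main obstacle is the consistency point in (ii). The fundamental lemma treats $v$ as a free input, whereas in the LPV Koopman surrogate $v_{k}=p_{k}\otimes u_{k}$ with $p_{k}=\lambda(z_{k},u_{k})$. I would handle this by stating (ii) for given $(u,p)$ sequences with $v:=p\otimes u$, so that any such pair is just a particular admissible input of the LTI embedding. I would also note that $p$ enters only through $v$, so the equivalence is unaffected. A second subtlety is that the state trajectory in the statement lives in the image of $\psi$. In the necessity direction, however, $z_{0}$ may be any vector in $\mathbb{R}^{n_{z}}$, which suffices because (i) gives surjectivity onto all of $\mathbb{R}^{n_{z}}\times\mathbb{R}^{Nn_{u}(1+n_{p})}$.
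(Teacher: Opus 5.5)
Your proposal is correct and follows the paper's route: rewrite the surrogate as the LTI system $z_{k+1}=Az_k+B\mathbf{u}_k$ with augmented input $\mathbf{u}_k=[u_k^{\top},v_k^{\top}]^{\top}$, apply the state-space fundamental lemma (the paper cites Theorem~1 of \cite{van2020willems}), and split the augmented-input Hankel matrix into its $u$ and $v$ blocks. You make explicit several points the paper leaves implicit: the row-permutation argument for (i), both directions of (ii), and the fact that $p$ is treated as an exogenous signal entering only through $v=p\otimes u$.
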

\begin{pf}
{\color{black}By using the definitions $v_{k}:=p_{k}\otimes u_{k}$, $B:=[B_{0},\tilde{B}]$, and $\mathbf{u}_{k}=[u_{k}^{\top},v_{k}^{\top}]^{\top}$, the Koopman surrogate \eqref{eq:lpv} can be rewritten in the form of 
\begin{align}\label{eq:lti}
    z_{k+1}&=Az_{k}+B_{0}u_{k}+\tilde{B}v_{k}\nonumber\\
    &=Az_{k}+B\mathbf{u}_{k}
\end{align}

By applying Theorem~1 of \cite{van2020willems} to \eqref{eq:lti}, it follows:
\begin{equation}\label{eq:response_uy_compact}
    \left[\begin{array}{c}
         \mathcal{H}_{N}(\mathbf{u}^{o}_{[0,T-1]})  \\
         \mathcal{H}_{N}(y^{o}_{[0,T-1]})
    \end{array}\right] \alpha= \left[\begin{array}{c}
         \mathbf{u}_{[0,N-1]}  \\
         y_{[0,N-1]}
    \end{array}\right]
\end{equation}
Since $\mathbf{u}_{k}=[u_{k}^{\top},v_{k}^{\top}]^{\top}$, the Hankel matrix $\mathcal{H}_{N}(\mathbf{u}^{o}_{[0,T-1]})$ and the vector $\mathbf{u}_{[0,N-1]}$ can be decomposed into two parts associated with $u_{k}$ and $v_{k}$, respectively. Consequently, we have
\begin{equation}\label{eq:response_y}
            \left[\begin{array}{c}
             \mathcal{H}_{N}(u^{o}_{[0,T-1]})  \\
             \mathcal{H}_{N}(v^{o}_{[0,T-1]})
        \end{array}\right] \alpha= \left[\begin{array}{c}
             u_{[0,N-1]}  \\
            v_{[0,N-1]}
        \end{array}\right]
\end{equation}
Substituting \eqref{eq:response_y} into \eqref{eq:response_uy_compact} yields \eqref{eq:fundamental_uy}. $\square$}
\end{pf}

{\color{black}
The augmented input vector $\mathbf{u}_{k}$ used in \eqref{eq:lti} consists of the original input $u_{k}$ and the auxiliary term $v_{k}$. Specifically, $v_{k}=p_{k}\otimes u_{k}$ captures 
how the scheduling parameter $p_{k}$ modulates the effect of the input $u_{k}$ on the system dynamics. By introducing this augmented input vector $\mathbf{u}_{k}$, the LPV Koopman surrogate \eqref{eq:lpv} can be reformulated as an LTI system with respect to the input $\mathbf{u}_{k}$, as shown in \eqref{eq:lti}. 
This reformulation is important because Willems' fundamental lemma (\cite{willems2005note}) applies only to LTI systems. Therefore, the lemma is applied not directly to the original LPV Koopman surrogate, but is applied to its equivalent LTI representation with respect to the augmented input $\mathbf{u}_{k}$.}

According to \cite{berberich2020trajectory}, the following relationship holds:
\begin{equation}\label{eq:fundamental_z}
  \mathcal{H}_{N}({\color{black}\boldsymbol{\psi}(x^{o}_{[0,T-1]})})\alpha={\color{black}\boldsymbol{\psi}(x_{[0,N-1]})} 
\end{equation}
{\color{black}where $\mathcal{H}_{N}(\boldsymbol{\psi}(x^{o}_{[0,T-1]}))\in\mathbb{R}^{Nn_{z}\times(T-N+1)}$, $\alpha\in\mathbb{R}^{T-N+1}$, and $\boldsymbol{\psi}(x_{[0,N-1]})\in\mathbb{R}^{Nn_{z}}$. } 
{\color{black} Based on \eqref{eq:fundamental_uy} and \eqref{eq:fundamental_z}, the Koopman surrogate \eqref{eq:lpv} can be represented as a linear combination of a single persistently excited data trajectory, which is the basis of the proposed method.}

It is worth noting that the outputs of the Koopman surrogate are not required to be identical to those of the original nonlinear system \eqref{eq:sys}. Instead, any output that is linear with respect to $z_{k}$ can be selected to construct a trajectory-based representation for \eqref{eq:lpv}.
By selecting \eqref{eq:D} as the measurement model for \eqref{eq:lpv}, it holds that 
\begin{equation}\label{eq:fundamental_x}
\mathcal{H}_{N}(x^{o}_{[0,T-1]})\alpha=x_{[0,N-1]}
\end{equation}
{\color{black}where
$\mathcal{H}_{N}(x^{o}_{[0,T-1]})\in\mathbb{R}^{Nn_{x}\times(T-N+1)}$, $\alpha\in\mathbb{R}^{T-N+1}$, and $x_{[0,N-1]}\in\mathbb{R}^{Nn_{x}}$.}

{\color{black}
\begin{rmk}\label{rmk:example}
To illustrate how the Koopman surrogate \eqref{eq:lpv} can be obtained, we use the following nonlinear system (\cite{iacob2024koopman}) as an example:
\begin{equation}
    \left[\begin{array}{c}x_{k+1, 1} \\ 
                            x_{k+1,2}\end{array}\right]
    =\left[\begin{array}{c} x_{k, 1} \\ 
                          x_{k, 2}- x_{k, 1}^2\end{array}\right]
    +\left[\begin{array}{c}1 \\ 
                            x_{k, 1}^2\end{array}\right] u_k
\end{equation}
where $x_{k,i}\in\mathbb{R}$ denotes the $i$-th element of state $x_{k}\in\mathbb{R}^{2}$, $i=1,2$; $u_{k}\in\mathbb{R}$ is the input. 
To establish the Koopman surrogate \eqref{eq:lpv}, we select the lifting functions as $z_{k}=\psi(x_{k}) = [x_{k, 1},x_{k, 2},x_{k, 1}^{2}]^{\top}$
and define the scheduling parameter as $p_{k}=\lambda(z_{k},u_{k})=[z_{k}^{\top},u_{k}]^{\top}$.
Then, the Koopman model \eqref{eq:koopman_p} has the following form:
\begin{equation*}
    z_{k+1} = \underbrace{\left[\begin{array}{ccc}
        1 & 0 & 0  \\
        0 & 1 & -1  \\ 
        0 & 0 & 1 
    \end{array}\right]}_{A}z_{k}+\underbrace{\left[\begin{array}{ccc}
        1  \\
        x_{k,1}^{2}   \\ 
        2x_{k,1} + u_{k} 
    \end{array}\right]}_{\beta(p_{k})}u_{k}
\end{equation*}
Moreover, it can be further reformulated in the form of  \eqref{eq:lpv}:
\begin{equation*}
    z_{k+1} = \underbrace{\left[\begin{array}{ccc}
        1 & 0 & 0  \\
        0 & 1 & -1  \\ 
        0 & 0 & 1 
    \end{array}\right]}_{A}z_{k}+\underbrace{\left[\begin{array}{cccc}
          1  \\ 
          0  \\
          0 
    \end{array}\right]}_{B_{0}}u_{k}+\underbrace{\left[\begin{array}{cccc}
         0 & 0 & 0 & 0 \\
         0 & 0 & 1 & 0 \\ 
         2 & 0 & 0 & 1
    \end{array}\right]}_{\tilde{B}}(p_{k}\otimes u_{k})
\end{equation*}

It is noted that the explicit forms of these matrices for {\color{black}the} nonlinear systems \eqref{eq:sys} are typically difficult to obtain.
These matrices can be identified numerically from data. Specifically, when the lifting functions and the scheduling mapping are determined, the matrices of the Koopman surrogate \eqref{eq:lpv} can be obtained by solving a least-squares problem of the form (\cite{eyubouglu2024data}, \cite{eyuboglu2025koopman}):
\begin{equation*}
    \min_{A,B_{0},\tilde{B}} \sum_{k=1}^{M-1}\|z_{k+1}-(Az_{k}+B_{0}u_{k}+\tilde{B}(p_{k}\otimes u_{k}))\|^{2}
\end{equation*}
where $M$ is the number of data points. Alternatively, inspired by \cite{shi2022deep}, \cite{li2025mamko}, and \cite{han2026mako}, when the lifting functions and scheduling mappings are approximated using neural networks, these matrices can also be treated as trainable parameters and learned jointly with the neural network parameters.
\end{rmk}
Remark \ref{rmk:example} is only intended to illustrate how the Koopman matrices in \eqref{eq:lpv} can be obtained. However, we note that the proposed data-enabled MHE approach does not require explicit identification of the Koopman matrices.
}

\section{Learning-based Data-enabled moving horizon estimation}
{\color{black}
In this section, we introduce the proposed data-enabled MHE framework, including the neural network training procedure, the formulation of the data-enabled MHE, and the stability analysis of the proposed method. 

A step-by-step illustration of the proposed data-enabled MHE is presented in Fig. \ref{Fig.Flowchart}. First, the original nonlinear system \eqref{eq:sys} is mapped to a lifted space, which results in a Koopman surrogate \eqref{eq:lpv}. In this step, neural networks are utilized to approximate the lifting functions and the scheduling mapping required to construct the Koopman surrogate \eqref{eq:lpv}.
Second, a trajectory-based representation of the Koopman surrogate \eqref{eq:lpv} is constructed directly from data, which bypasses explicit modeling. Third, based on this trajectory-based representation, the proposed data-enabled MHE is formulated in the lifted space to estimate the lifted states. Finally, the estimated lifted states are mapped back to the original space using the reconstruction matrix $D$.}

\begin{figure}[tttt]
\color{black}
  \centering
  \includegraphics[width=0.5\textwidth]{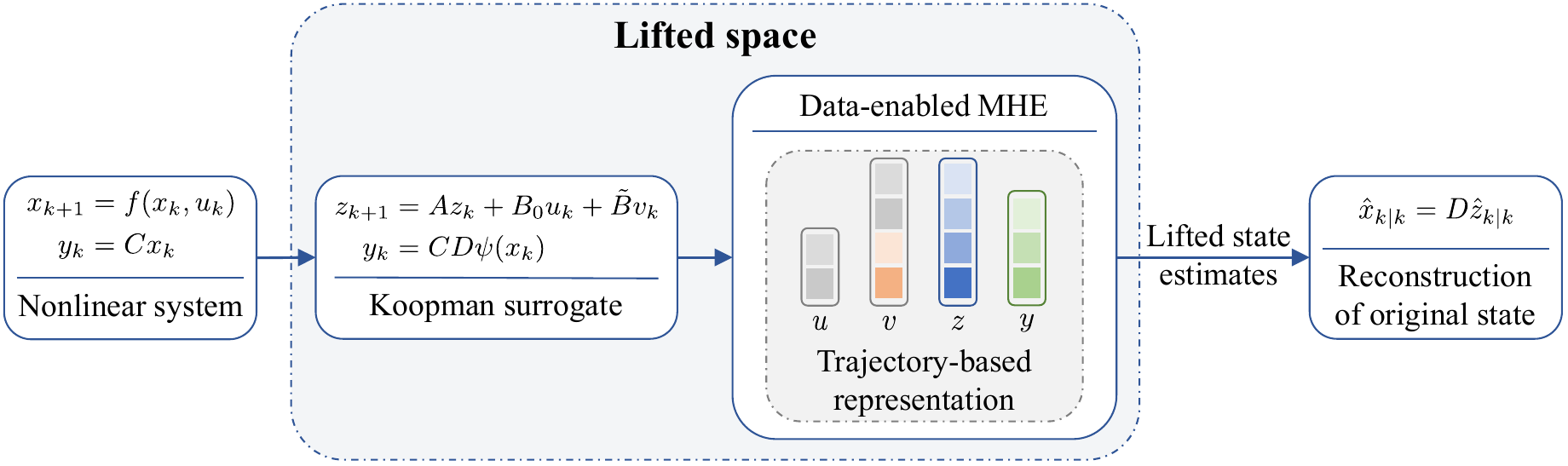}
  \caption{Flowchart of the proposed data-enabled MHE.}\label{Fig.Flowchart}
\end{figure}
\subsection{Training process}\label{sec:3.3}

According to Theorem \ref{thm:fundamental}, it is sufficient to describe \eqref{eq:lpv} using its input, output, state, and parameter trajectories without the need to identify an explicit representation. {\color{black} Since full-state information is available during the offline training stage, we choose the system state $x_{k}$ as the output of the Koopman surrogate (i.e., \eqref{eq:D} serving as the measurement model).}
{\color{black} In the Koopman surrogate \eqref{eq:lpv}-\eqref{eq:D}, the input $u_{k}$ and output $x_{k}$ are directly obtained from {\color{black}the} system \eqref{eq:sys}. In contrast, the lifted state $z_{k}$ and the scheduling parameter $p_{k}$ are not directly measurable.
}
Note that both $z_{k}$ and $p_{k}$ can be determined from $x_{k}$ and $u_{k}$, if the lifting function $\psi$ and the scheduling mapping $\lambda$ are known. {\color{black} Therefore, the objective of this section is to approximate functions $\psi$ and $\lambda$ using two dense neural networks (DNNs) (Rumelhart et al. (1986)). Based on \eqref{eq:NN_z} and \eqref{eq:NN_p}, the corresponding trajectories of $z_{k}$ and $p_{k}$ can be constructed.}

A training dataset $\mathcal{D}$ consists of 
two open-loop input and state trajectories: 1) {\color{black}$u^{o}_{[0,T-1]}$ and $x^{o}_{[0,T]}$} used to construct the Hankel matrices; 2) {\color{black}$u_{[0,N-1]}$ and $x_{[0,N]}$} used for online prediction. Fig. \ref{Fig.schematic} shows an illustrative diagram of the training process of the proposed learning-based data-enabled MHE.

{\color{black}The trained neural network $\psi_{\theta}$ projects the original state $x_{k}$ to obtain the lifted state $z_{k}$ as follows:}
\begin{equation}\label{eq:NN_z}
    z_{k}=\psi_{\theta}(x_{k}|\theta)
\end{equation}
where $\theta$ denotes the trainable parameters of neural network $\psi_{\theta}$. 
{\color{black}The trained neural network $\lambda_{\sigma}$ generates the scheduling parameter $p_{k}$ as follows: }
\begin{equation}\label{eq:NN_p}
    p_{k}=\lambda_{\sigma}(z_{k},u_{k}|\sigma)
\end{equation}
where $\sigma$ includes the trainable parameters of neural network $\lambda_{\sigma}$. Moreover, the reconstruction matrix $D$, which reconstructs the original system state $x_{k}$ from the lifted state $z_{k}$, is also treated as part of the trainable parameters.

The optimization problem associated with offline training is formulated as follows:
\begin{equation}
    \min_{\theta,\sigma,D} \mathcal{L} = \mathcal{L}_{1} + \mathcal{L}_{2}
\end{equation}
{\color{black}where $\mathcal{L}_{1}$ and $\mathcal{L}_{2}$ are defined as follows:
\begin{subequations}\label{eq:cost}
\begin{align}
\mathcal{L}_{1}&=\mathbb{E}_{\mathcal{D}}\sum_{i=0}^{N}\|Dz_{i}-x_{i}\|\\
\mathcal{L}_{2}&=\mathbb{E}_{\mathcal{D}}\sum_{i=0}^{N}\Bigg\|D\mathcal{H}_{1}(z_{[i,i+T-N]}^{o})\\
    &\quad\times\left[\begin{array}{cc}
         \mathcal{H}_{N}(u_{[0,T-1]}^{o})\\
         \mathcal{H}_{N}(v_{[0,T-1]}^{o})\\
         \mathcal{H}_{N+1}(x_{[0,T]}^{o})
    \end{array}\right]^{\dag}\left[\begin{array}{cc}
         u_{[0,N-1]}\\
         v_{[0,N-1]}\\
         x_{[0,N]}
    \end{array}\right]-x_{i}\Bigg\|
\end{align}
\end{subequations}
where $\mathbb{E}_{\mathcal{D}}$ represents the expectation taken over the dataset $\mathcal{D}$.}
{\color{black}In \eqref{eq:cost}, $\mathcal{L}_{1}$ penalizes the error between the states reconstructed by matrix $D$ and the actual states. $\mathcal{L}_{2}$ 
penalizes the discrepancy between the states predicted by the trajectory-based representation of Koopman surrogate \eqref{eq:lpv}-\eqref{eq:D} and the actual states of the nonlinear system \eqref{eq:sys}.}

\begin{figure}[tttt]
\color{black}
  \centering
  \includegraphics[width=0.46\textwidth]{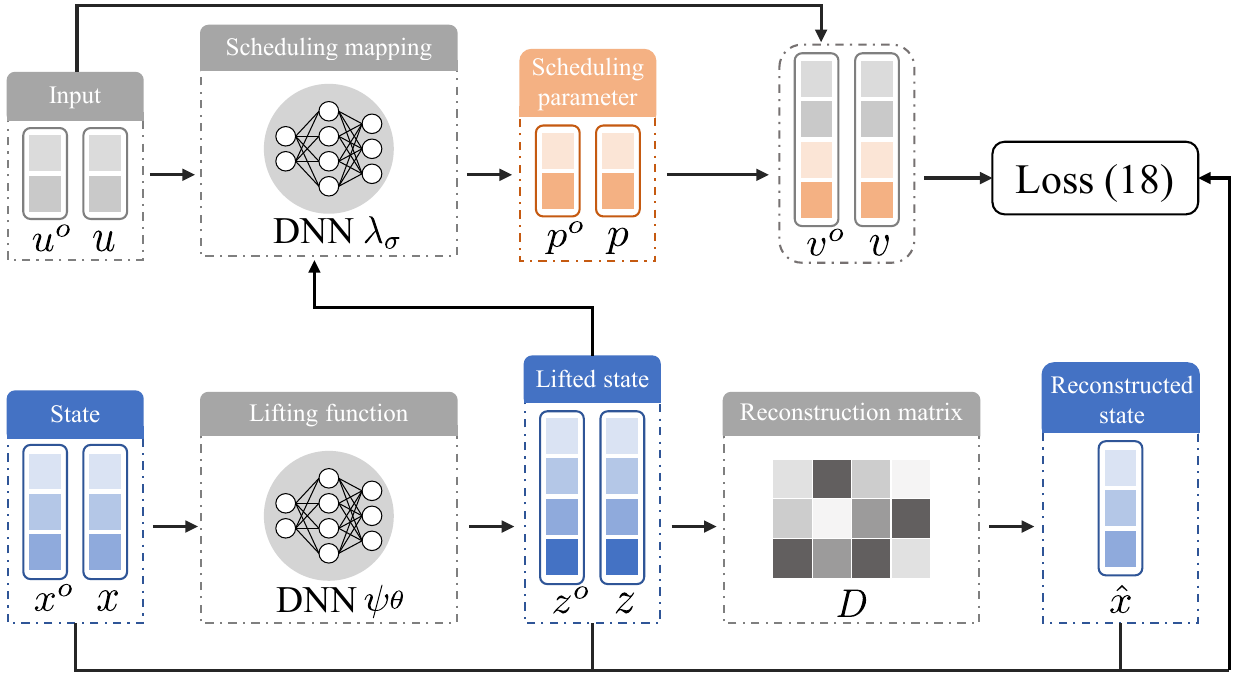}
  \caption{An illustrative diagram of the proposed data-enabled MHE.}\label{Fig.schematic}
\end{figure}


\subsection{Data-enabled MHE in lifted space}
In practice, modeling residuals inevitably arise from the LPV approximation in \eqref{eq:affine} and/or from the neural network-based approximations of the lifting function and scheduling mapping. To account for these modeling residuals, we suppose that an exact LPV Koopman surrogate can be expressed in the following form:
\begin{subequations}\label{eq:surrogate}
\begin{align}
     z_{k+1}&=Az_{k}+B_{0}u_{k}+\tilde{B}(p_{k}\otimes u_{k})+w_{k}\label{eq:surrogate_1}\\
     y_{k}&=CDz_{k}+d_{k}\label{eq:surrogate_2}\\
     x_{k}&=Dz_{k}+r_{k}\label{eq:surrogate_3}
\end{align}
\end{subequations}
where $w_{k}$, $d_{k}$, and $r_{k}$ represent unknown modeling residuals.
The lifted state, measured output, and reconstructed state of the nominal version of \eqref{eq:surrogate}, which includes \eqref{eq:lpv}, \eqref{eq:D}, and \eqref{eq:CD}, are denoted by $\check{z}_{k}$, $\check{y}_{k}$, and $\check{x}_{k}$, respectively.

\begin{assumption}\label{ass:continuous}
    The scheduling mapping $\lambda_{\sigma}$ is continuous, and lifting function $\psi_{\theta}$ is locally Lipschitz continuous on $\mathcal{X}$, which satisfies $\|\psi_{\theta}(x)-\psi_{\theta}(\check{x})\|\leq L_{\psi}\|x-\check{x}\|$, $\forall x,\check{x}\in\mathcal{X}$.
\end{assumption}

\begin{proposition}\label{prop:bounded}
    If Assumption \ref{ass:continuous} holds, there exist bounds $\bar{w}$, $\bar{d}$, $\bar{r}$, such that $\|w_{k}\|\leq\bar{w}$, $\|d_{k}\|\leq\bar{d}$, and $\|r_{k}\|\leq\bar{r}$ hold for all $k\in\mathbb{N}$.
\end{proposition}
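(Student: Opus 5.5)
The plan is to express each residual explicitly as a function of the true state and input. Then I will use compactness of $\mathcal{X}\times\mathcal{U}$ together with continuity to bound it uniformly. Since the residuals are defined implicitly by \eqref{eq:surrogate} with $z_k=\psi_\theta(x_k)$ and $p_k=\lambda_\sigma(z_k,u_k)$, we can solve for them:
\begin{align*}
w_k&=\psi_\theta(f(x_k,u_k))-A\psi_\theta(x_k)-B_0u_k-\tilde{B}\big(\lambda_\sigma(\psi_\theta(x_k),u_k)\otimes u_k\big),\\
d_k&=Cx_k-CD\psi_\theta(x_k),\qquad r_k=x_k-D\psi_\theta(x_k).
\end{align*}
Each residual is therefore a fixed map $g_w(x_k,u_k)$, $g_d(x_k)$, $g_r(x_k)$ of the current state and input. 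The matrices $A,B_0,\tilde{B},C,D$ are constant, so this holds for every $k$.

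First, I will check continuity of these maps on $\mathcal{X}\times\mathcal{U}$.
\begin{itemize}
\item By Assumption \ref{ass:continuous}, $\psi_\theta$ is Lipschitz, hence continuous, on $\mathcal{X}$.
\item $f$ is continuous by the system description and maps $\mathcal{X}\times\mathcal{U}$ into $\mathcal{X}$. Hence $\psi_\theta\circ f$ is continuous.
\item $\lambda_\sigma$ is continuous by Assumption \ref{ass:continuous}. Its argument $(\psi_\theta(x),u)$ depends continuously on $(x,u)$.
\item The Kronecker product is bilinear, hence continuous, and linear maps are continuous.
\end{itemize}
It follows that $g_w$, $g_d$ and $g_r$ are continuous.

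Second, I will invoke compactness. Since $\mathcal{X}$ and $\mathcal{U}$ are compact, $\mathcal{X}\times\mathcal{U}$ is compact. By the extreme value theorem, $\|g_w\|$, $\|g_d\|$ and $\|g_r\|$ attain finite maxima there, and I will define $\bar w,\bar d,\bar r$ as these maxima. The trajectory satisfies $x_k\in\mathcal{X}$ and $u_k\in\mathcal{U}$ for all $k$, because $f(\mathcal{X}\times\mathcal{U})\subseteq\mathcal{X}$ and the inputs lie in $\mathcal{U}$. Hence the bounds hold for all $k\in\mathbb{N}$.

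Optionally, I can give explicit bounds using the Lipschitz constant. For instance, $\|r_k\|\le \|x_k-x^\star\|+\|D\|L_\psi\|x_k-x^\star\|+\|x^\star-D\psi_\theta(x^\star)\|$ for a fixed $x^\star\in\mathcal{X}$, which is bounded by $\mathrm{diam}(\mathcal{X})$ terms. This is not needed for existence.

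The main subtlety is that $\lambda_\sigma$ is evaluated at $\psi_\theta(x)$, which lies in $\psi_\theta(\mathcal{X})$ rather than in a prescribed compact set. This is handled by noting that $\psi_\theta(\mathcal{X})\times\mathcal{U}$ is itself compact, as a continuous image of a compact set times a compact set. Continuity of $\lambda_\sigma$ on that set is all that is required.
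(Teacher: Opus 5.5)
Your proof is correct and takes essentially the same route as the paper, which simply defers to Proposition~2 of \cite{zhang2022robust}. That argument also writes each residual as a continuous function of $(x_k,u_k)$ through $z_k=\psi_\theta(x_k)$ and $p_k=\lambda_\sigma(z_k,u_k)$, then bounds it using compactness of $\mathcal{X}\times\mathcal{U}$ and invariance of $\mathcal{X}$ under $f$ (the same reasoning reappears in the proof of Theorem~\ref{thm:stability}), and you are right that the Lipschitz constant is needed only for explicit bounds, not for existence.
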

\begin{pf}
    This proposition can be proven following the proof of Proposition~2 in \cite{zhang2022robust}.
\end{pf}

{\color{black}During the offline stage, both {\color{black}the} Koopman surrogate in \eqref{eq:surrogate} and its nominal version are initialized at the same initial condition, $z_{0}^{o}=\check{z}^{o}_{0}$. Subsequently, they evolve under the same input sequence {\color{black}$\mathbf{u}^{o}:=u^{o}_{[0,T-1]}$} and scheduling parameter sequence {\color{black}$\mathbf{p}^{o}:=p^{o}_{[0,T-1]}$}.}
{\color{black} We define the offline state deviation as $\Delta z_{k}^{o}:=z_{k}^{o}-\check{z}^{o}_{k}=\sum_{i=0}^{k-1}A^{k-1-i}w^{o}_{i}$ and the output deviation as $\Delta y^{o}_{k}:=y^{o}_{k}-\check{y}^{o}_{k}=CD\Delta z^{o}_{k}+d^{o}_{k}$. Proposition \ref{prop:bounded} implies that both deviations are bounded. Specifically, it holds that $\|\Delta z_{k}^{o}\|\leq\bar{\Delta}_{z}^{o}$ and $\|\Delta y_{k}^{o}\|\leq\bar{\Delta}_{y}^{o}$, $\forall k\in\mathbb{N}_{[0,T]}$, where $\bar{\Delta}^{o}_{z}:=\sum_{i=0}^{T-1}\|A\|^{k-1-i}\bar{w}$ and $\bar{\Delta}^{o}_{y}=\|C\|\|D\|\bar{\Delta}_{z}^{o}+\bar{d}$.}

{\color{black}
For online implementation, only the measured output $y_{k}$ is available. Accordingly, we select $y_{k}$ as the output of the Koopman surrogate (i.e., \eqref{eq:CD} as the measurement model).}
{\color{black}
According to Theorem \ref{thm:fundamental}, the input, output, state, and parameter trajectories of the Koopman surrogate in \eqref{eq:lpv} and \eqref{eq:CD} provide a trajectory-based representation. Based on this representation, we formulate a data-enabled MHE in the lifted space.} Specifically, at each time instant $k\geq N$, the estimator provides state estimates by solving the following optimization problem:
\begin{subequations}\label{eq:mhe}
\begin{align}
&\quad\quad\min_{\hat{z}_{[k-N,k]|k},\pi^{y}_{[k-N,k]|k},\pi^{z}_{[k-N,k]|k},\alpha_{k}} V_{k} \nonumber\\
\mathrm{s.t.}\,&
        \left[\begin{array}{c}
             \mathcal{H}_{N}(u_{[0,T-1]}^{o})  \\
             \mathcal{H}_{N}(v_{[0,T-1]}^{o})  \\
             \mathcal{H}_{N+1}(\tilde{y}_{[0,T]}^{o})\\
             \mathcal{H}_{N+1}(\tilde{z}_{[0,T]}^{o})
        \end{array}\right] \!\!\alpha_k\!\!= \!\left[\begin{array}{c}
             u_{[k-N,k-1]}  \\
             v_{[k-N,k-1]}\\
             \tilde{y}_{[k-N,k]}\!-\!\pi^{y}_{[k-N,k]|k}\\
             \hat{z}_{[k-N,k]|k}\!+\!\pi^{z}_{[k-N,k]|k}
        \end{array}\right]\label{eq:mhe_1}\\
        &\,D\hat{z}_{j|k}\in\mathcal{X},~j\in\mathbb{N}_{[k-N,k]}\label{eq:mhe_2}
\end{align}
\text{with}
\begin{align}\label{eq:mhe_obj}
    V_{k}&=\lambda_{z}\|\hat{z}_{k-N|k}-\bar{z}_{k-N}\|_{P}^{2}+\sum_{j=k-N}^{k}\|\pi^{z}_{j|k}\|^{2}_{Q}\\
&\quad+\sum_{j=k-N}^{k}\|\pi^{y}_{j|k}\|^{2}_{R}+\lambda_{\alpha}(\bar{\varepsilon}_{x}+\bar{\varepsilon}_{y}+\bar{\Delta}_{z}^{o}+\bar{\Delta}_{y}^{o})\|\alpha_{k}\|^{2}\nonumber
\end{align}
\end{subequations}
where $\hat{z}_{j|k}$ denotes the state estimate of \eqref{eq:lpv} for time instant $j$, computed at time instant $k$; $\pi^{y}_{[k-N,k]|k}$ is a sequence of fitting errors that account for the effect of offline/online measurement noise on measured outputs and the modeling residual $d_{[k-N,k]}$; $\pi^{z}_{[k-N,k]|k}$ is a sequence of slack variables that account for the noise in the state trajectory $\tilde{z}_{[0,T]}^{o}$ and the modeling residual $w_{[k-N,k]}$; $u_{[k-N,k-1]}$ contains the online noise-free known inputs from time instant $k-N$ to $k-1$; $p_{j}=\lambda_{\sigma}(\hat{z}_{j|j}, u_{j})$ is the approximate scheduling parameter at time instant $j$; $v_{j}=p_{j}\otimes u_{j}$; $\tilde{y}_{[k-N,k]}$ contains the noisy online measured outputs from time instant $k-N$ to $k$;  $\bar{z}_{k-N}=\hat{z}_{k-N|k-N}$ is the prior estimate of $z_{k-N}$. {\color{black} To ensure the convexity of the MHE problem \eqref{eq:mhe}, the weighting matrices $P$, $Q$, and $R$ are required to be positive definite, and the scalars $\lambda_{z}$, $\lambda_{\alpha}$ satisfy $\lambda_{z}$, $\lambda_{\alpha}>0$.}
Note that when $k<N$, the data-enabled MHE in \eqref{eq:mhe} reduces to a full-information estimator, obtained by replacing $N$ with $k$ in \eqref{eq:mhe}. 
Based on \eqref{eq:mhe}, the state estimate $\hat{x}_{k|k}$ of {\color{black}the} system \eqref{eq:sys} is obtained as $\hat{x}_{k|k} = D\hat{z}_{k|k}$.

{\color{black}
\begin{rmk}
The finite size of datasets may affect the validity of the proposed data-enabled MHE method, as this depends on whether the collected offline augmented input sequence satisfies the required persistent excitation condition. As discussed in Remark~1 of \cite{zhang2023dimension}, the offline dataset size $T$ must satisfy $T\geq n_{z}+Nn_{u}(1+n_{p})+N-1$. In addition, the offline augmented input sequence must be persistently exciting of order $N+n_{z}+1$. 
Although a larger dataset makes these excitation conditions more likely to be satisfied, it also increases the dimension of the optimization variable $\alpha_{k}$, which will increase the computational complexity and potentially limit practical deployment in resource-limited settings.
\end{rmk}}

\subsection{Stability analysis}
Let $z(k;z_{0},\mathbf{w})$ denote the solution to \eqref{eq:surrogate_1} at time instant $k$, given the initial state $z_{0}$, the input sequence $\mathbf{u}:=\{u_{0},u_{1},\ldots\}$, the scheduling parameter sequence $\mathbf{p}:=\{p_{0},p_{1},\ldots\}$, and modeling residual sequence $\mathbf{w}:=\{w_{0},w_{1},\ldots\}$. We omit $\mathbf{u}$ and $\mathbf{p}$ from the notation $z(k;z_{0},\mathbf{w})$ for simplicity. Affected by the modeling residual $\mathbf{d}=\{d_{0},d_{1},\ldots\}$, the corresponding measured output of \eqref{eq:surrogate_2} at time instant $k$ is given by $h(z(k;z_{0},\mathbf{w}),\mathbf{d})$. 
\begin{definition}(\textbf{i-UEIOSS}) (\cite{alessandri2025robust})\label{def:ioss}
    System \eqref{eq:surrogate_1}-\eqref{eq:surrogate_2} is i-UEIOSS if there exist $c_{z},c_{w},c_{h}>0$ and $\xi\in (0,1)$ such that, for any two solutions $z(k;z_{0},\mathbf{w})$ and $z(k;\check{z}_{0},\check{\mathbf{w}})$, it holds for all $k\in\mathbb{N}$ that:
    \begin{align}\label{eq:e-uoss}
        &\quad\|z(k;z_{0},\mathbf{w})-z(k;\check{z}_{0},\check{\mathbf{w}})\|\nonumber\\
        &\leq c_{z}\|z_{0}-\check{z}_{0}\|\xi^{k}+c_{w}\sum_{\tau=0}^{k-1}\|w_{\tau}-\check{w}_{\tau}\|\xi^{k-1-\tau}\\
        &\quad+c_{h}\sum_{\tau=0}^{k-1}\|h(z(\tau;z_{0},\mathbf{w}),\mathbf{d})-h(z(\tau;\check{z}_{0},\check{\mathbf{w}}),\check{\mathbf{d}})\|\xi^{k-1-\tau}\nonumber
    \end{align}
\end{definition}
Note that the i-UEIOSS in \cite{alessandri2025robust} is formulated in terms of squared norms.


\begin{assumption}\label{ass:detectable}
    The pair $(A,CD)$ is detectable.
\end{assumption}
Assumption \ref{ass:detectable} indicates the incremental uniform exponential input/output-to-state stability (i-UEIOSS) of {\color{black}the} system \eqref{eq:surrogate_1}-\eqref{eq:surrogate_2}. 


{\color{black}\begin{assumption}\label{ass:parameter}
The estimation horizon \(N\), the weighting matrices \(P\), \(Q\), \(R\), and the tuning parameters \(\lambda_z\), \(\lambda_{\alpha}\) satisfy
\begin{align*}
&c_z \xi^{N} \leq \lambda_z \underline{p},~c_z + 1 \leq \underline{q},~c_h \leq \underline{r}, \\
&\max \big\{\max\{L_{\psi},1\}(c_z + 1)\sqrt{n_z (T - N + 1)}, \\
&\quad \quad\quad c_h N \sqrt{n_y (T - N + 1)}
\big\}\leq \lambda_{\alpha}
\end{align*}
where $\underline{p}$, $\underline{q}$, and $\underline{r}$ are the smallest eigenvalues of matrices $P$, $Q$, and $R$, respectively.
\end{assumption}}

{\color{black}Before deriving Theorem \ref{thm:stability}, we impose the assumptions stated in Assumption \ref{ass:controllable} and Assumptions \ref{ass:continuous}-\ref{ass:parameter}. Specifically, the augmented system \eqref{eq:lti} is assumed to be controllable, and the offline augmented input sequence is persistently exciting of sufficient order. The scheduling mapping $\lambda_{\sigma}$ is assumed to be continuous, and the lifting function $\psi_{\theta}$ is locally Lipschitz continuous. Additionally, the pair $(A, CD)$ is assumed to be detectable. Finally, the parameters of the MHE design \eqref{eq:mhe} are assumed to be appropriately tuned.}
\begin{thm}\label{thm:stability}
Consider {\color{black}the} system \eqref{eq:sys} subject to offline measurement noise (i.e., $\varepsilon_{x,k}^{o}$ and $\varepsilon_{y,k}^{o}$), online measurement noise $\varepsilon_{y,k}$, and modeling residuals (i.e., $w_{k}$, $d_{k}$, $r_{k}$) with respect to its corresponding Koopman surrogate \eqref{eq:surrogate}. If Assumption \ref{ass:controllable} and Assumptions \ref{ass:continuous}-\ref{ass:parameter} hold, then there exist $\gamma_{z}$, $\gamma_{w}$, $\gamma_{\varepsilon}$, $\gamma_{d}>0$, $\theta_{z}\in(0,1)$, and $\theta\in\mathcal{K}$, such that the following condition holds for all $k\in\mathbb{N}$:
\begin{align}\label{eq:thm_pres}
    &\|x_{k}-\hat{x}_{k|k}^{*}\|\leq
    \gamma_{z}\|x_{0}-\bar{x}_{0}\|\theta_{z}^{k}+\theta(\delta)\\
    &+\gamma_{w}\sum_{\tau=0}^{k}\|w_{k-\tau}\|\theta_{z}^{\tau}+\gamma_{\varepsilon}\sum_{\tau=0}^{k}\|\varepsilon_{y,k-\tau}\|\theta_{z}^{\tau}+\gamma_{d}\sum_{\tau=0}^{k}\|d_{k-\tau}\|\theta_{z}^{\tau}\nonumber
\end{align}
where $\delta=\max\{\bar{\varepsilon}_{x},\bar{\varepsilon}_{y},\bar{\Delta}_{z}^{o},\bar{\Delta}_{y}^{o},\bar{r}\}$.
\end{thm}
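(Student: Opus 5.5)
The plan is to follow the Lyapunov-free, i-UEIOSS-based robust stability argument for data-enabled MHE in \cite{wolff2024robust}, transplanted to the lifted space, and then map back to $x$ via $D$ and $r_k$. Concretely, I will first prove an estimate of the form \eqref{eq:thm_pres} for the lifted error $\|z_k-\hat z^{*}_{k|k}\|$ and then use $x_k-\hat x^{*}_{k|k}=D(z_k-\hat z^{*}_{k|k})+r_k$, together with $\|z_0-\bar z_0\|\le L_{\psi}\|x_0-\bar x_0\|$ from Assumption \ref{ass:continuous}, to obtain the claim. The term $\|r_k\|\le\bar r$ is absorbed into $\theta(\delta)$; this is why $\bar r$ appears in $\delta$.

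\textbf{Step 1 (feasible candidate and cost bound).} At time $k\ge N$ I will build a feasible point of \eqref{eq:mhe} from the true trajectory. By Theorem \ref{thm:fundamental} (which needs only Assumption \ref{ass:controllable} applied to the nominal LTI form \eqref{eq:lti}), there is an $\bar\alpha_k$ reproducing the nominal trajectory from the noise-free nominal Hankel matrices. I take $\bar\alpha_k$ to be the least-norm solution. Its norm is bounded by a constant $c_\alpha$ times the norm of the true window data, which lies in a compact set because $\mathcal X$, $\mathcal U$ are compact and $\psi_\theta$ and $\lambda_\sigma$ are continuous. I then set $\hat z_{j|k}=z_j$ and choose the slacks $\pi^{z},\pi^{y}$ to absorb the following:
\begin{itemize}
\item the offline perturbations $\mathcal H(\tilde z^o)-\mathcal H(\check z^o)$, of size at most $\sqrt{n_z(T-N+1)}\,(L_\psi\bar\varepsilon_x+\bar\Delta^o_z)$;
\item the corresponding output perturbations, of size at most $\sqrt{n_y(T-N+1)}(\bar\varepsilon_y+\bar\Delta^o_y)$, multiplied by $\|\bar\alpha_k\|$;
\item the online $w$, $d$, $\varepsilon_y$ contributions accumulated over the window.
\end{itemize}
The scheduling mismatch between $v_j$ computed from $\hat z_{j|j}$ and the true $p_j$ must also be handled. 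I plan to treat it as part of the residual $w_j$, using $\tilde B$ bounded and the continuity of $\lambda_\sigma$. Substituting into $V_k$ gives
\[
V_k^{*}\le \lambda_z\|z_{k-N}-\bar z_{k-N}\|_P^2+c\sum_{j}(\|w_j\|^2+\|d_j\|^2+\|\varepsilon_{y,j}\|^2)+\mu(\delta),
\]
with $\mu\in\mathcal K$.

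\textbf{Step 2 (use i-UEIOSS on the optimal solution).} Assumption \ref{ass:detectable} gives i-UEIOSS (Definition \ref{def:ioss}) of \eqref{eq:surrogate_1}--\eqref{eq:surrogate_2}. I will compare the true trajectory with the trajectory implied by the optimizer $\hat z^{*}_{[k-N,k]|k}$, which satisfies \eqref{eq:surrogate_1} up to "disturbances" expressed through $\pi^{z*}$, $\pi^{y*}$ and $\alpha_k^{*}$ times the offline noise. This yields $\|z_k-\hat z^*_{k|k}\|\le c_z\xi^N\|z_{k-N}-\hat z^*_{k-N|k}\|+(c_z+1)\sum\|\pi^{z*}\|+c_h\sum\|\pi^{y*}\|+(\text{offline terms})\|\alpha^*_k\|+\dots$. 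The conditions in Assumption \ref{ass:parameter} are exactly what allow me to dominate each of these terms by the matching weighted term in $V_k^{*}$:
\begin{itemize}
\item $c_z\xi^N\le\lambda_z\underline p$ handles the prior-weighting term;
\item $c_z+1\le\underline q$ handles $\pi^{z*}$;
\item $c_h\le\underline r$ handles $\pi^{y*}$;
\item the $\lambda_\alpha$ bound handles $\|\alpha^*\|$, after a Cauchy--Schwarz step from sums of norms to sums of squares.
\end{itemize}
Combining with Step 1 gives a contraction of the form $\|e_k\|\le\rho\|e_{k-N}\|+(\text{disturbance terms})$ with $\rho<1$. Here I use $\bar z_{k-N}=\hat z_{k-N|k-N}$ and the triangle inequality on $\|z_{k-N}-\hat z^{*}_{k-N|k}\|$.

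\textbf{Step 3 (iterate).} Iterating the $N$-step recursion gives geometric decay with $\theta_z=\rho^{1/N}$. The intermediate times and the full-information phase $k<N$ are handled by the same argument with horizon $k$. Convolution sums in $w$, $\varepsilon_y$, $d$ with factor $\theta_z^\tau$ appear, and a constant offset results from $\delta$, which defines $\theta$.

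The main obstacle is Step 2, specifically obtaining an \emph{a priori} bound on $\|\alpha_k^*\|$ and on the scheduling mismatch. The offline noise multiplies $\alpha_k^*$, so the bound must come from the regularization $\lambda_\alpha(\cdot)\|\alpha_k\|^2\le V_k^*$. This makes the square-root relation produce terms like $\sqrt{\delta}$, hence a class-$\mathcal K$ $\theta(\delta)$ rather than a linear gain. My first attempt is to bound $\|\alpha^*_k\|^2\le V_k^*/(\lambda_\alpha\delta')$, where $\delta'=\bar\varepsilon_x+\bar\varepsilon_y+\bar\Delta^o_z+\bar\Delta^o_y$. I then use boundedness of the state on compact $\mathcal X$ so that the products $\delta'\|\alpha^*_k\|\le\sqrt{\delta' V^*_k/\lambda_\alpha}$ are class-$\mathcal K$ in $\delta$.
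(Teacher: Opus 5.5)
Your plan is essentially the paper's proof, which likewise adapts Theorem~2 of \cite{wolff2024robust}: a pseudo-inverse candidate $\alpha_k$ gives the upper bound on $(\eta_N V_k^{*})^{1/2}$, i-UEIOSS together with Assumption~\ref{ass:parameter} lets the lower bound on $(\eta_N V_k^{*})^{1/2}$ absorb every optimizer-dependent term, and the $N$-step recursion is iterated, with the $k<N$ phase and the map $x_k=Dz_k+r_k$ handled as you propose, apart from two small points: your $\rho<1$ does not follow from Assumption~\ref{ass:parameter}, which only bounds the weights from below, so you need the paper's explicit extra tuning of $\lambda_z$ and $N$ that makes $\omega_z=\epsilon+\sqrt{\eta_N\lambda_z\bar{p}}<1$; and absorbing the scheduling mismatch into $w_j$ is more careful than the paper, which lets both trajectories run on the same $\mathbf{p}$, but it makes the $v_j$ in your candidate estimate-based, so compactness of $\mathcal{X}$ alone no longer bounds $\|\alpha_k\|$. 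The step you call the main obstacle is already settled by your Step~2 bullet: since $\bar{\lambda}_{\alpha}=\lambda_{\alpha}\delta'$ carries the offline noise level, the terms $c\,\delta'\|\alpha_k^{*}\|$ with $c\le\lambda_{\alpha}$ are dominated linearly by $\bar{\lambda}_{\alpha}\|\alpha_k^{*}\|\le(\eta_N V_k^{*})^{1/2}$, and the $\sqrt{\delta}$-type part of $\theta(\delta)$ comes from the candidate's term $\sqrt{\eta_N\bar{\lambda}_{\alpha}}\,\bar{\alpha}$; your fallback of bounding $V_k^{*}$ uniformly via compactness of $\mathcal{X}$ is not available, because $V_k^{*}$ contains $\|z_{k-N}-\bar{z}_{k-N}\|$ and $D\hat{z}_{j|k}\in\mathcal{X}$ does not confine $\hat{z}_{j|k}$ when $D$ has a nontrivial kernel.
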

\begin{pf}
This proof is based on the proof of Theorem~2 in \cite{wolff2024robust}. 
For notational simplicity, we denote $\bar{\lambda}_{\alpha}=\lambda_{\alpha}(\bar{\varepsilon}_{x}+\bar{\varepsilon}_{y}+\bar{\Delta}_{z}^{o}+\bar{\Delta}_{y}^{o})$.

\textbf{(I) Lower and upper bounds for optimal cost $V_{k}^{*}$}

A lower bound for optimal cost $V_{k}^{*}$ has the following form:
\begin{align}\label{eq:lower_bound_sqrt}
   &(\eta_{N} V^{*}_{k})^{\frac{1}{2}}\geq \lambda_{z}\underline{p}\|\hat{z}_{k-N|k}^{*}-\bar{z}_{k-N}\|+\bar{\lambda}_{\alpha}\|\alpha_{k}^{*}\|\\
   &\quad\quad\quad+\underline{q}\sum_{j=k-N}^{k}\|\pi^{z,*}_{j|k}\|+\sum_{j=k-N}^{k}\underline{r}\|\pi^{y,*}_{j|k}\|\nonumber
\end{align}
where $\eta_k = \lambda_{z}\underline{p}+\underline{q}(k+1)+\underline{r}(k+1)+\bar{\lambda}_{\alpha}$. 
Then, we leverage the trajectory of {\color{black}the} nominal system to construct a feasible solution for the proposed data-enabled MHE \eqref{eq:mhe}, which provides an upper bound for optimal cost $V_{k}^{*}$.
{\color{black}Based on Assumption \ref{ass:controllable} and the condition $\mathrm{(i)}$ in Theorem \ref{thm:fundamental}, we select $\alpha_k$ as}
\begin{equation}\label{eq:select_alpha}
    \alpha_{k}= \left[\begin{array}{c}
        \mathcal{H}_{N}(u^{o}_{[0,T-1]}) \\
        \mathcal{H}_{N}(v^{o}_{[0,T-1]}) \\
        \mathcal{H}_{1}(\check{z}^{o}_{[0,T-N]})
    \end{array}\right]^{\dag}\left[\begin{array}{c}
        u_{[k-N,k-1]} \\
        v_{[k-N,k-1]} \\
        \check{z}_{k-N}
    \end{array}\right]
\end{equation}
$\alpha_k$ satisfies:
\begin{equation*}
    \left[\begin{array}{c}
         \mathcal{H}_{N}(u^{o}_{[0,T-1]}) \\
        \mathcal{H}_{N}(v^{o}_{[0,T-1]}) \\
         \mathcal{H}_{N+1}(\check{y}^{o}_{[0,T]})\\
         \mathcal{H}_{N+1}(\check{z}^{o}_{[0,T]})
    \end{array}\right] \alpha_k= \left[\begin{array}{c}
         u_{[k-N,k-1]} \\
        v_{[k-N,k-1]} \\
         \tilde{\check{y}}_{[k-N,k]}-\varepsilon_{y,[k-N,k]}\\
         \check{z}_{[k-N,k]}
    \end{array}\right]
\end{equation*}
where $\varepsilon_{y,[k-N,k]}$ is the actual online measurement noise from time instant $k-N$ to $k$.
To satisfy \eqref{eq:mhe_1}, we select 
\begin{subequations}\label{eq:select_wv}
    \begin{align}
    \pi^{z}_{[k-N,k]}&
    =\mathcal{H}_{N+1}(\Delta z^{o}_{[0,T]})\alpha_{k}+\mathcal{H}_{N+1}( \varepsilon^{o}_{z,[0,T]})\alpha_{k}\label{eq:select_pi_z}\\
     \pi^{y}_{[k-N,k]}
     &=\varepsilon_{y,[k-N,k]}+\Delta y_{[k-N,k]}-\mathcal{H}_{N+1}(\Delta y_{[0,T]}^{o})\alpha_{k}\nonumber\\
     &\quad-\mathcal{H}_{N+1}(\varepsilon_{y,[0,T]}^{o})\alpha_{k}
     \label{eq:select_pi_y}
\end{align}
\end{subequations}
where $\varepsilon_{z,j}^{o}=\psi_{\theta}(x_{j}^{o}+\varepsilon_{x,j}^{o})-\psi_{\theta}(x_{j}^{o})$, $j\in\mathbb{N}_{[0,T]}$.
An upper bound for $V_{k}^{*}$ is established as follows:
\begin{align}\label{eq:upper_bound_sqrt}
    &(\eta_{N} V_{k}^{*})^{\frac{1}{2}}\leq\sqrt{\eta_{N} \lambda_{z}\bar{p}}\|z_{k-N}-\bar{z}_{k-N}\|+\sqrt{\eta_{N} \bar{\lambda}_{\alpha}}\|\alpha_{k}\|\nonumber\\
    &\quad\quad\quad+\sqrt{\eta_{N} \bar{q}}\sum_{j=k-N}^{k}\|\pi^{z}_{j}\|
    +\sqrt{\eta_{N}\bar{r}}\sum_{j=k-N}^{k}\|\pi^{y}_{j}\|
\end{align}
\normalsize
where $\bar{p}$, $\bar{q}$, and $\bar{r}$ are largest eigenvalues of matrices $P$, $Q$, and $R$, respectively. When $k<N$, the lower and upper bounds for the optimal cost $V_{k}^{*}$ can be derived by replacing $N$ with $k$ in \eqref{eq:lower_bound_sqrt} and \eqref{eq:upper_bound_sqrt}.

\textbf{(II) Boundedness of estimation error $\|z_{k}-\hat{z}_{k|k}^{*}\|$}

To establish an upper bound on the estimation error, we consider two trajectories of \eqref{eq:surrogate}: 1) the nominal trajectory generated using $\alpha_{k}$ as defined in \eqref{eq:select_alpha}; 2) the estimated trajectory obtained using $\alpha_{k}^{*}$. Specifically, given the initial condition $z_{k-N}$, the input sequence $\mathbf{u}$, and the scheduling parameter $\mathbf{p}$, the nominal state at time instant $k$ is denoted as $z_{k}=z(N;z_{k-N},\mathbf{0})$. The estimated state trajectory is initialized from $\check{z}_{k-N}$, evolves under the same $\mathbf{u}$ and $\mathbf{p}$, but is affected by the modeling residuals $\mathbf{w}$ and $\mathbf{d}$. The estimated state at time instant $k$ is denoted as $\check{z}_{k}=z(N;\check{z}_{k-N},\mathbf{w})$. We select $\check{z}_{k-N}$ as follows:
\begin{align}\label{eq:x2}
    \check{z}_{k-N}&=\hat{z}_{k-N|k}^{*}+\pi_{k-N|k}^{z,*}-\mathcal{H}_{1}(\Delta z^{o}_{[0,T-N]})\alpha_{k}^{*}\nonumber\\
    &\quad-\mathcal{H}_{1}(\varepsilon_{z,[0,T-N]}^{o})\alpha_{k}^{*}
\end{align}
Then, the state $\check{z}_{k}$ can be obtained as follows:
\begin{equation*}
   \check{z}_{k}=\hat{z}_{k|k}^{*}+\pi_{k|k}^{z,*}-\mathcal{H}_{1}(\Delta z^{o}_{[N,T]})\alpha_{k}^{*}-\mathcal{H}_{1}(\varepsilon_{z,[N,T]}^{o})\alpha_{k}^{*}+\Delta z_{k}
\end{equation*}
It follows that
\begin{align}\label{eq:delta xk_1}
    \|z_{k}-\hat{z}_{k|k}^{*}\|&\leq\|z_{k}-\check{z}_{k}\|+\|\pi_{k|k}^{z,*}\|+\|\mathcal{H}_{1}(\Delta z^{o}_{[N,T]})\alpha_{k}^{*}\|\nonumber\\
    &\quad+\|\mathcal{H}_{1}(\varepsilon_{z,[N,T]}^{o})\alpha_{k}^{*}\|+\|\Delta z_{k}\|
\end{align}
{\color{black}According to Assumption \ref{ass:detectable} and \eqref{eq:x2}, it holds that}
\begin{align}\label{eq:delta xk_2_temp}
   &\|z_{k}-\check{z}_{k}\|\leq c_{z}\|z_{k-N}-\hat{z}_{k-N|k}^{*}\|\xi^{N}+c_{z}\|\pi_{k-N|k}^{z,*}\|\xi^{N}\nonumber\\
    &\quad+c_{z}\|\mathcal{H}_{1}(\Delta z^{o}_{[0,T-N]})\alpha_{k}^{*}\|\xi^{N}+c_{z}\|\mathcal{H}_{1}(\varepsilon_{z,[0,T-N]}^{o})\alpha_{k}^{*}\|\xi^{N}\nonumber\\
    &\quad+c_{h}\sum_{\tau=0}^{N-1}\|h(z(\tau;z_{k-N},\mathbf{0}),\mathbf{0})-h(z(\tau;\check{z}_{k-N},\mathbf{w}),\mathbf{d})\|\nonumber \\
    &\quad\times\xi^{N-\tau-1}+c_{w}\sum_{\tau=0}^{N-1}\|w_{k-N+\tau}\|\xi^{N-\tau-1}
\end{align}

Then, we establish upper bounds for each term on the right-hand side of \eqref{eq:delta xk_2_temp}. It follows from $0<\xi<1$ that $c_{z}\|\pi_{k-N|k}^{z,*}\|\xi^{N}+\|\pi_{k|k}^{z,*}\|\leq (c_{z}+1)\|\pi_{[k-N,k]|k}^{z,*}\|$.
Note that $h(z(\tau;z_{k-N},\mathbf{0}),\mathbf{0})=\tilde{y}_{k-N+\tau}-\Delta y_{k-N+\tau}-\varepsilon_{y,k-N+\tau}$ and $h(z(\tau;\check{z}_{k-N},\mathbf{w}),\mathbf{d})=\mathcal{H}_{1}(\check{y}^{o}_{[\tau,T-N+\tau]})\alpha_{k}^{*}+\Delta y_{k-N+\tau}=\tilde{y}_{k-N+\tau}-\pi^{y,*}_{k-N+\tau|k}-\mathcal{H}_{1}(\varepsilon^{o}_{y,[\tau,T-N+\tau]})\alpha^{*}_{k}-\mathcal{H}_{1}(\Delta y^{o}_{[\tau,T-N+\tau]})\alpha^{*}_{k}+\Delta y_{k-N+\tau}$
hold for $\tau\in\mathbb{N}_{[0,N]}$ with $\Delta y_{k-N+\tau}=CD\Delta z_{k-N+\tau}+d_{k-N+\tau}$.
One can obtain that
\begin{align}\label{eq:h_term}
    &\quad \|h(z(\tau;z_{k-N},\mathbf{0}),\mathbf{0})-h(z(\tau;\check{z}_{k-N},\mathbf{w}),\mathbf{d})\|\nonumber\\
    &\leq2\|\Delta y_{k-N+\tau}\|+\|\varepsilon_{y,k-N+\tau}\|+\|\pi^{y,*}_{k-N+\tau|k}\|\\
    &\quad+\|\mathcal{H}_{1}(\varepsilon^{o}_{y,[\tau,T-N+\tau]})\alpha^{*}_{k}\|+\|\mathcal{H}_{1}(\Delta y^{o}_{[\tau,T-N+\tau]})\alpha^{*}_{k}\|\nonumber
\end{align}
Finally, we establish upper bounds for the terms associated with $\alpha_{k}^{*}$.
Following \cite{wolff2024robust}, we have
\begin{equation}\label{eq:bound_epsilon_y}
    \|\mathcal{H}_{1}(\varepsilon_{y,[\tau,T-N+\tau]}^{o})\|\leq\sqrt{n_{y}(T-N+1)}\bar{\varepsilon}_{y}
\end{equation}
We can obtain upper bounds for $\|\mathcal{H}_{1}(\Delta y_{[\tau,T-N+\tau]}^{o})\|$ and $\|\mathcal{H}_{1}(\Delta z_{[\tau,T-N+\tau]}^{o})\|$ in a similar way.
{\color{black}By leveraging Lipschitz continuity of the lifting function $\psi_{\theta}$ from Assumption \ref{ass:continuous}, it follows that
\begin{equation}
    \|\varepsilon_{z,j}^{o}\|=\|\psi_{\theta}(x_{j}^{o}+\varepsilon_{x,j}^{o})-\psi_{\theta}(x_{j}^{o})\|\leq L_{\psi}\bar{\varepsilon}_{x}
\end{equation}
} Then, we have that
\begin{equation}\label{eq:bound_epsilon_x_1}
    \|\mathcal{H}_{1}(\varepsilon_{z,[N,T]}^{o})\|\leq\sqrt{n_{z}(T-N+1)}L_{\psi}\bar{\varepsilon}_{x}
\end{equation}
Selecting $c_{\alpha}=\max\big\{(c_{z}+1)\sqrt{n_{z}(T-N+1)}L_{\psi}$, $(c_{z}+1)\sqrt{n_{z}(T-N+1)},c_{h}N\sqrt{n_{y}(T-N+1)}\big\}$ and 
substituting \eqref{eq:delta xk_2_temp}-\eqref{eq:bound_epsilon_x_1} into \eqref{eq:delta xk_1} yield
\begin{align}\label{eq:delta xk_2}
    &\|z_{k}-\hat{z}_{k|k}^{*}\|\leq c_{z}\|z_{k-N}-\bar{z}_{k-N}\|\xi^{N}\!+\!c_{z}\|\hat{z}_{k-N|k}^{*}-\bar{z}_{k-N}\|\xi^{N}\nonumber\\
    &\quad\quad\quad+(c_{z}+1)\|\pi_{[k-N,k]|k}^{z,*}\|+\|\Delta z_{k}\|+\bar{c}_{\alpha}\|\alpha_{k}^{*}\|\\
    &\quad\quad\quad+c_{h}\sum_{\tau=0}^{N-1}\|\pi^{y,*}_{k-N+\tau|k}\|+c_{h}\sum_{\tau=0}^{N-1}\big(\|\varepsilon_{y,k-N+\tau}\|\nonumber\\
    &\quad\quad\quad+2\Delta y_{k-N+\tau}\|\big)\xi^{N-\tau-1}+c_{w}\sum_{\tau=0}^{N-1}\|w_{k-N+\tau}\|\xi^{N-\tau-1}\nonumber
\end{align}
where $\bar{c}_{\alpha}=c_{\alpha}(\bar{\varepsilon}_{x}+\bar{\varepsilon}_{y}+\bar{\Delta}_{y}^{o}+\bar{\Delta}_{z}^{o})$.
When $k<N$, \eqref{eq:delta xk_2} holds by replacing $N$ with $k$. 
Denote $k=\tilde{k}+jN$ with $\tilde{k}\in\mathbb{N}_{[0,N-1]}$ and $j\in\mathbb{N}$.

\textbf{(II.a) Case I: $k\geq N$}.

Select $\epsilon\in(0,1)$ and an estimation horizon $N$ such that $c_{z}\xi^{N}\leq \epsilon$. The smallest value of $N$ satisfying this condition is denoted by $N_{0}$. Select $\lambda_{\alpha}$, $P$, $Q$, and $R$ such that $\epsilon\leq\lambda_{z}\underline{p}$, $c_{z}+1\leq\underline{q}$, $c_{h}\leq\underline{r}$, and $c_{\alpha}\leq\lambda_{\alpha}$.
{\color{black} Based on Assumption \ref{ass:parameter}, \eqref{eq:lower_bound_sqrt} and \eqref{eq:upper_bound_sqrt}, we derive from \eqref{eq:delta xk_2} that}
\begin{align}\label{eq:delta xk-N+1_3}
    &\quad\|z_{\tilde{k}+N}-\hat{z}_{\tilde{k}+N|\tilde{k}+N}^{*}\|\nonumber\\
    &\leq
    (\epsilon+\sqrt{\eta_{N}\lambda_{z}\bar{p}})\|z_{\tilde{k}}-\bar{z}_{\tilde{k}}\|+\sqrt{\eta_{N}\bar{\lambda}_{\alpha}}\|\alpha_{\tilde{k}+N}\|\\
&\quad+\sqrt{\eta_{N}\bar{q}}\sum_{\tau=0}^{N}\|\pi_{\tilde{k}+\tau}^{z}\|+\sqrt{\eta_{N}\bar{r}}\sum_{\tau=0}^{N}\|\pi^{y}_{\tilde{k}+\tau}\|+\|\Delta z_{\tilde{k}+N}\|\nonumber\\
    &\quad+\sum_{\tau=0}^{N-1}\big(c_{h}(2\|\Delta y_{\tilde{k}+\tau}\|+\|\varepsilon_{y,\tilde{k}+\tau}\|)+c_{w}\|w_{\tilde{k}+\tau}\|\big)\xi^{N-\tau-1}\nonumber
\end{align}
From \eqref{eq:select_wv} and the derivation of \eqref{eq:bound_epsilon_y}, it holds that
\begin{align}\label{eq:bound_wv}
    \|\pi^{z}_{\tilde{k}+\tau}\|
    &\leq\sqrt{n_{z}(T-N+1)}(L_{\psi}\bar{\varepsilon}_{x}+\bar{\Delta}_{z}^{o})\|\alpha_{\tilde{k}+N}\|\nonumber\\
\|\pi^{y}_{\tilde{k}+\tau}\|
     &\leq\|\varepsilon_{y,\tilde{k}+\tau}\|+\|\Delta y_{\tilde{k}+\tau}\|\\
     &\quad+\sqrt{n_{y}(T-N+1)}(\bar{\varepsilon}_{y}+\bar{\Delta}_{y}^{o})\|\alpha_{\tilde{k}+N}\|\nonumber
\end{align}
Moreover, we have $\|\Delta z_{\tilde{k}+\tau}\|\leq \omega_{A}\sum_{i=0}^{\tau-1}\|w_{\tilde{k}+i}\|$ and $\|\Delta y_{\tilde{k}+\tau}\|\leq\|C\|\|D\|\|\Delta z_{\tilde{k}+\tau}\|+\|d_{\tilde{k}+\tau}\|$ with $\omega_{A}=\max\{1,\|A\|^{N-1}\}$.
Given that $u_k\in\mathcal{U}$ and $x_k\in\mathcal{X}$, the continuity of $\psi_{\theta}$ and $\lambda_{\sigma}$ implies that $z_{k}=\psi_{\theta}(x_{k})$ and $p_{k}=\lambda_{\sigma}(z_{k},u_{k})$ are contained in compact sets. Therefore, there exist $\bar{u}$, $\bar{z}$, and $\bar{p}$, such that $\|u_{k}\|\leq\bar{u}$, $\|z_{k}\|\leq\bar{z}$, and $\|p_{k}\|\leq\bar{p}$ hold for all $k\in\mathbb{N}$. It follows that $\|v_{k}\|\leq\bar{u}\bar{p}$.
Following \cite{wolff2024robust}, there exists an upper bound $\bar{\alpha}$ such that $\|\alpha_{k}\|\leq\bar{\alpha}$.
By \eqref{eq:bound_wv}, \eqref{eq:delta xk-N+1_3} can be reformulated as follows:
\begin{align}\label{eq:delta xk-N+1_4}
    &\|z_{\tilde{k}+N}-\hat{z}_{\tilde{k}+N|\tilde{k}+N}^{*}\|\leq \omega_{z}\|z_{\tilde{k}}-\bar{z}_{\tilde{k}}\|+\omega(\varpi)\\
    &\quad+\sum_{\tau=0}^{N}(\omega_{w}\|w_{\tilde{k}+\tau}\|+\omega_{\varepsilon}\|\varepsilon_{y,\tilde{k}+\tau}\|+\omega_{d}\|d_{\tilde{k}+\tau}\|)\xi^{N-\tau-1}\nonumber
\end{align}
where $\omega_{z}=\epsilon+\sqrt{\eta_{N}\lambda_{z}\bar{p}}$, $\omega_{\varepsilon}=\sqrt{\eta_{N}\bar{r}}\xi^{1-N}+c_{h}$, $\omega_{d}=\sqrt{\eta_{N}\bar{q}}\xi^{1-N}+2c_{h}$, $\omega_{w}=c_{w}+\omega_{d}\|C\|\|D\|\omega_{A}(N+1)+\omega_{A}\xi^{1-N}$, $\varpi=\max\{\bar{\varepsilon}_{x},\bar{\varepsilon}_{y}, \bar{\Delta}_{z}^{o},\bar{\Delta}_{y}^{o}\}$, and $\omega(\varpi)=\big((\frac{2c_{\alpha}}{(c_{z}+1)}\sqrt{\eta_{N}\bar{q}}+\frac{2c_{\alpha}(N+1)}{c_{h}N}\sqrt{\eta_{N}\bar{r}})\varpi+\sqrt{\eta_{N}\bar{\lambda}_{\alpha}}\big)\bar{\alpha}$.


Since $\omega_{z}\leq\lambda_{z}\underline{p}+\sqrt{\eta_{N}\lambda_{z}\bar{p}}$, the condition $\omega_{z}<1$ can be satisfied by tuning $\lambda_{z}$.
For $j>1$, by jointly considering the fact that $\bar{z}_{\tau}=\hat{z}_{\tau|\tau}^{*}$ and \eqref{eq:delta xk-N+1_4}, it is further obtained that
\begin{align}\label{eq:delta xk-j(N+1)}
    &\|z_{\tilde{k}+jN}-\hat{z}_{\tilde{k}+jN|\tilde{k}+jN}^{*}\|\leq\|z_{\tilde{k}}-\bar{z}_{\tilde{k}}\|\omega_{z}^{j}+\frac{1}{1-\omega_{z}}\omega(\varpi)\nonumber\\
    &\quad\quad+\sum_{i=0}^{j-1}\omega_{z}^{i}\sum_{\tau=0}^{N}\big(\omega_{w}\|w_{\tilde{k}+(j-i)N-\tau}\|+\omega_{\varepsilon}\|\varepsilon_{y,\tilde{k}+(j-i)N-\tau}\|\nonumber\\
    &\quad\quad+\omega_{d}\|d_{\tilde{k}+(j-i)N-\tau}\|\big)\xi^{\tau-1}
\end{align}

\textbf{(II.b) Case II: $k< N$}.

According to the lower and upper bounds established in \cite{wolff2024robust}, one has
\begin{align*}
    &\|\hat{z}_{0|k}^{*}-\bar{z}_{0}\|\!\leq\!\sigma\|z_{0}-\bar{z}_{0}\|\!+\!\sum_{\tau=0}^{k}(\sigma_{z}\|\pi^{z}_{\tau}\|\!+\!\sigma_{y}\|\pi^{y}_{\tau}\|)\!+\!\sigma_{\alpha}\|\alpha_{k}\|
\end{align*}
where $\sigma=\sqrt{\bar{p}/\underline{p}}$, $\sigma_{z}=\sqrt{\bar{q}/\lambda_{z}\underline{p}}$, $\sigma_{y}=\sqrt{\bar{r}/\lambda_{z}\underline{p}}$, and $\sigma_{\alpha}=\sqrt{\bar{\lambda}_{\alpha}/\lambda_{z}\underline{p}}$.
Since $c_{z}\xi^{N}\leq\epsilon\leq\lambda_z\underline{p}$, applying the same procedure to derive \eqref{eq:delta xk-N+1_4} yields
\begin{align}\label{eq:38}
    \|z_{k}-\hat{z}_{k|k}^{*}\|&\leq \rho_{z}\|z_{0}-\bar{z}_{0}\|+\rho(\varpi)\\
    &\quad+\sum_{\tau=0}^{k}(\rho_{w}\|w_{\tau}\|+\rho_{\varepsilon}\|\varepsilon_{y,\tau}\|+\rho_{d}\|d_{\tau}\|)\xi^{k-\tau-1}\nonumber
\end{align}
where $\tilde{c}_{z}=c_{z}(1-\xi^{N})$, $\rho_{z}=\epsilon+\tilde{c}_{z}\sigma+\sqrt{\eta_{N}\lambda_{z}\bar{p}}$, $\rho_{\varepsilon}=(\sqrt{\eta_{N}\bar{r}}+\tilde{c}_{z}\sigma_{y})\xi^{1-N}+c_{h}$, $\rho_{d}=(\sqrt{\eta_{N}\bar{q}}+\tilde{c}_{z}\sigma_{z})\xi^{1-N}+2c_{h}$, and $\rho_{w}=\omega_{w}$, $\rho(\varpi)=\big(\tilde{c}_{z}\sigma_{\alpha}+(\frac{2c_{\alpha}(\sqrt{\eta_{N}\bar{q}}+\tilde{c}_{z}\sigma)}{(c_{z}+1)}+\frac{2c_{\alpha}(N+1)(\sqrt{\eta_{N}\bar{r}}+\tilde{c}_{z}\sigma_{y})}{c_{h}N})\varpi+\sqrt{\eta_{N}\bar{\lambda}_{\alpha}}\big)\bar{\alpha}$.
Select $\theta_{z} = \omega_{z}^{\frac{1}{2N}}$, which satisfies 
$\omega_{z}^{j}\leq\theta_{z}^{\tilde{k}+jN}$ for $\tilde{k}\in[0,N)$, $j\in\mathbb{N}_{[1,\infty)}$.
It follows that 
\begin{align}\label{eq:derive_w}
    \sum_{\tau=0}^{\tilde{k}}\|w_{\tilde{k}-\tau}\|\omega_{z}^{j}\!+\!\sum_{i=0}^{j-1}\sum_{\tau=0}^{N}\|w_{\tilde{k}+(j-i)N-\tau}\|\omega_{z}^{i}\!\leq\!\tilde{\rho}\sum_{\tau=0}^{k}\|w_{k-\tau}\|\theta_{z}^{\tau}
\end{align}
where $\tilde{\rho}=\max\{2,\theta_{z}^{1-N}\}$. By substituting \eqref{eq:38} into \eqref{eq:delta xk-j(N+1)} and applying the procedure of \eqref{eq:derive_w} to the terms involving $\varepsilon_{y,k}$ and $d_{k}$, we obtain 
\begin{align*}
     \|z_{k}-\hat{z}_{k|k}^{*}\|&\leq \rho_{z}\|z_{0}-\bar{z}_{0}\|\theta_{z}^{k}+\theta(\varpi)\\
    &+\sum_{\tau=0}^{k}(\gamma_{w}\|w_{k-\tau}\|+\gamma_{\varepsilon}\|\varepsilon_{y,k-\tau}\|+\gamma_{d}\|d_{k-\tau}\|)\theta_{z}^{\tau}\nonumber
\end{align*}
where $\gamma_{w}=\rho_{w}\xi^{-1}\tilde{\rho}$, $\gamma_{\varepsilon}=\rho_{\varepsilon}\xi^{-1}\tilde{\rho}$, $\gamma_{d}=\rho_{d}\xi^{-1}\tilde{\rho}$, and  $\theta(\varpi)=\rho(\varpi)+\frac{\omega(\varphi)}{1-\omega_{z}}$. 
By \eqref{eq:surrogate_3}, we further derive that
\begin{align*}
     \|x_{k}-\hat{x}_{k|k}^{*}\|
     &\leq \|D\|(\rho_{z}L_{\psi}\|x_{0}-\bar{x}_{0}\|\theta_{z}^{k}+\sum_{\tau=0}^{k}(\gamma_{w}\|w_{k-\tau}\|\\
    &\quad+\gamma_{\varepsilon}\|\varepsilon_{y,k-\tau}\|+\gamma_{d}\|d_{k-\tau}\|)\theta_{z}^{\tau}+\theta(\varpi))+\bar{r}\nonumber
\end{align*}
This completes the proof. $\square$
\end{pf}

{\color{black}Theorem \ref{thm:stability} analyzes the robustness of the proposed data-enabled MHE against measurement noise; the analysis shows that the estimation error of the proposed method remains bounded even in the presence of measurement noise. At the same time, the error bound in \eqref{eq:thm_pres} depends on the magnitude of the measurement noise, which indicates that higher noise levels may degrade estimation performance.}

\section{Applications to membrane bioreactor}
{\color{black}In this section, a benchmark simulation model for membrane bioreactors (BSM-MBR) (\cite{maere2011bsm}) is utilized to evaluate the performance of the proposed learning-based data-enabled MHE method. }

\subsection{Process description}
We consider a membrane bioreactor used for wastewater treatment, with a schematic illustration provided in Fig. \ref{Fig.MBR} (\cite{maere2011bsm}). This plant comprises two anoxic tanks and three aerobic tanks, with the final aerobic tank being equipped with a membrane module for filtration. In the anoxic tanks, predenitrification occurs to convert nitrate into gaseous nitrogen. In the aerobic tanks, the ammonium is oxidized into nitrite through nitrification. The membrane bioreactor integrates membrane modules within the fifth tank for filtration (\cite{maere2011bsm}).

Two streams are fed into the first anoxic tank: 1) the influent wastewater at concentration $Z_{f}$ and flow rate $Q_{f}$; 2) a recycle stream from the outlet of the second aerobic tank at concentration $Z_{r2}$ and flow rate $Q_{r2}$.
The outlet of the membrane bioreactor process consists of three parts: 1) the permeate, which contains purified water through membrane filtration, continuously withdrawn at concentration $Z_{e}$ and flow rate $Q_{e}$; 2) a sludge recycle stream, returned from the third aerobic tank to the first aerobic tank with concentration $Z_{r1}$ and flow rate $Q_{r1}$; 3) the waste sludge stream, discharged at concentration $Z_{w}$ and flow rate $Q_{w}$ (\cite{maere2011bsm}).

This process comprises eight biological reactions, and it involves thirteen major substances. The concentrations of these substances in five biological reactors constitute the 65 state variables of this process.
The definitions of these state variables and a detailed description of the membrane bioreactor process can be found in \cite{maere2011bsm}. To monitor the process, eight sensors are assumed to be installed in each biological tank, which can be found in Table~3 in \cite{li2023partition}. The control inputs of the membrane bioreactor process comprise the airflow rate $Q_{a,i}$, $i\in\mathbb{N}_{[1,3]}$, in three aerobic tanks and the internal recycle flow rate $Q_{r2}$. Additionally, this system is influenced by uncontrollable inputs, including inlet flow rate $Q_{f}$ and the concentration $Z_{f}$ of thirteen inlet substances. The corresponding influent data profiles are obtained from the International Water Association website\footnote{http://www.benchmarkwwtp.org} (\cite{alex2008benchmark}).
{\color{black}The sampling period of this process is set as 15 min.}

\subsection{Simulation setting}
{\color{black}
To learn the lifting functions using neural networks, we first conduct open-loop simulations of the first-principles model in \cite{guo2020nonlinear} to generate datasets containing input, output, and state trajectories. 
During the offline stage, a dataset of length $T=53760$ is generated and partitioned into a training set with 43008 samples and a validation set with 10752 samples. In addition, a dataset of length $T=1344$ for an operating period of 14 days is generated for testing, and a dataset of length $T=5000$ is used for constructing Hankel matrices.
For online implementation of the proposed data-enabled MHE, a dataset of length $T=1344$ is generated for constructing Hankel matrices, while another dataset of length $T=672$ is used for evaluating the performance of the proposed method.}
Note that the five state variables (i.e., $S_{I}$ in each tank) remain invariant throughout the entire simulation; these states are excluded from the datasets.

To construct the datasets, the initial condition is set to a steady-state operating point of the membrane bioreactor process, as presented in Table~3 of \cite{guo2020nonlinear}.
The control inputs are randomly generated from a uniform distribution between $u_{\min}$ and $u_{\max}$, with $u_{\min}=[4000~\mathrm{Nm^{3}/h},~2000~\mathrm{Nm^{3}/h},~5000~\mathrm{Nm^{3}/h},~0~\mathrm{m^{3}/day}]^{\top}$ and $u_{\max}=[4500~\mathrm{Nm^{3}/h},~2500~\mathrm{Nm^{3}/h},~38012~\mathrm{Nm^{3}/h}$, $160900~\mathrm{m^{3}/day}]^{\top}$. Each sampled control input is kept constant for 40 sampling periods.
Additive unknown noise is generated from a Gaussian distribution with zero mean and standard deviation of $0.01\times u_{\max}$ and is added to the control inputs. 
{\color{black}Additive measurement noise is introduced into the offline datasets. Specifically, the state and measured output are contaminated by noise samples drawn from Gaussian distributions with zero mean and standard deviation of $0.01\times x_{0}$ and $0.01\times y_{0}$, respectively. The output measurement noise in the online dataset is generated in the same way and is added to the measured outputs accordingly.
}

\begin{figure}[tttt]
  \centering
  \includegraphics[width=0.48\textwidth]{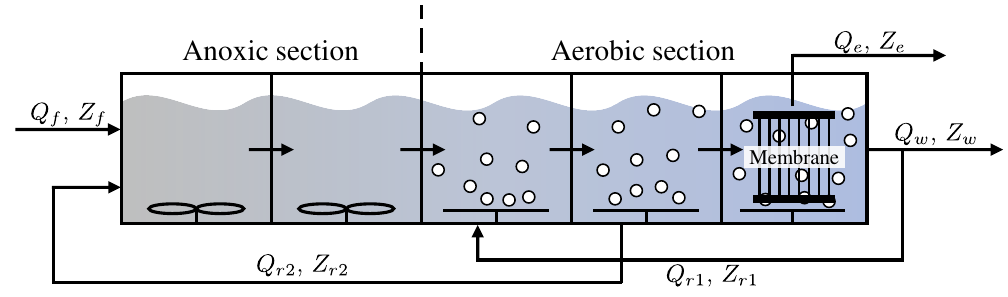}
  \caption{A schematic of the membrane bioreactor for wastewater treatment (\cite{maere2011bsm}).}\label{Fig.MBR}
\end{figure}

{\color{black}To implement the proposed learning-based data-enabled MHE approach, the dimensions of the scheduling parameter and the lifted states are chosen as $n_{p}=1$ and $n_{z}=80$, respectively. Two dense neural networks (\cite{rumelhart1986learning}), $\psi_{\theta}$ and $\lambda_{\sigma}$, are employed to approximate the state lifting function and the scheduling mapping, respectively. 
}
Specifically, $\psi_{\theta}$ has two hidden layers comprising 128 and 256 neurons, respectively, while $\lambda_{\sigma}$ consists of three hidden layers with 128, 256, and 256 neurons, respectively.
For both networks, the rectified linear unit (ReLU) (\cite{goodfellow2016deep}) is utilized as the activation function after the input and hidden layers. {\color{black}The hyperparameters used for training are shown in Table \ref{tbl:hyperparameters}.}
\begin{table}[t]
  \centering
{\color{black}
  \caption{Hyperparameters of the proposed method.}
  \label{tbl:hyperparameters}
  \small
  \renewcommand{\arraystretch}{1.1}
  \begin{tabular}{lc}
    \toprule[1pt]
    Hyperparameter & Value \\
    \midrule
    Training epochs & 400 \\
    Batch size & 256 \\
    Learning rate & $1\times10^{-4}$ \\
    Activation function & ReLU\\
    Optimizer & Adam \\
    Structure of $\psi_{\theta}$ & (128, 256) \\
    Structure of $\lambda_{\sigma}$ & (128, 256, 256) \\
    Scheduling parameter dimension $n_{p}$ & 1\\
    Lifted state dimension $n_{z}$ & 80\\
    \bottomrule[1pt]
  \end{tabular}
  \renewcommand{\arraystretch}{1.0}}
\end{table}

{\color{black}The neural network parameters (i.e., $\theta$ and $\sigma$) and the reconstruction matrix $D$ are jointly optimized by minimizing the 
loss function $\mathcal{L}$ defined in \eqref{eq:cost}. The training is performed over 400 epochs with a batch size of 256 using the Adam optimizer (\cite{kingma2014adam}) with a learning rate of $10^{-4}$.}

The initial guess $\bar{x}_{0}$ is selected as $1.05\times x_{0}$. The tuning parameters are chosen as $\lambda_{z}=600$, $\lambda_{\alpha}=1\times 10^{4}$, $\bar{\varepsilon}_{x}=\bar{\varepsilon}_{y}=0.001$, and $\bar{\Delta}_{z}=\bar{\Delta}_{y}=0.003$. The weighting matrices are set as $P = 100\times I_{80}$, $Q = 100\times I_{80}$, and $R = 1000\times I_{40}$.

\begin{figure}[tttt]
  \centering
  \includegraphics[width=0.48\textwidth]{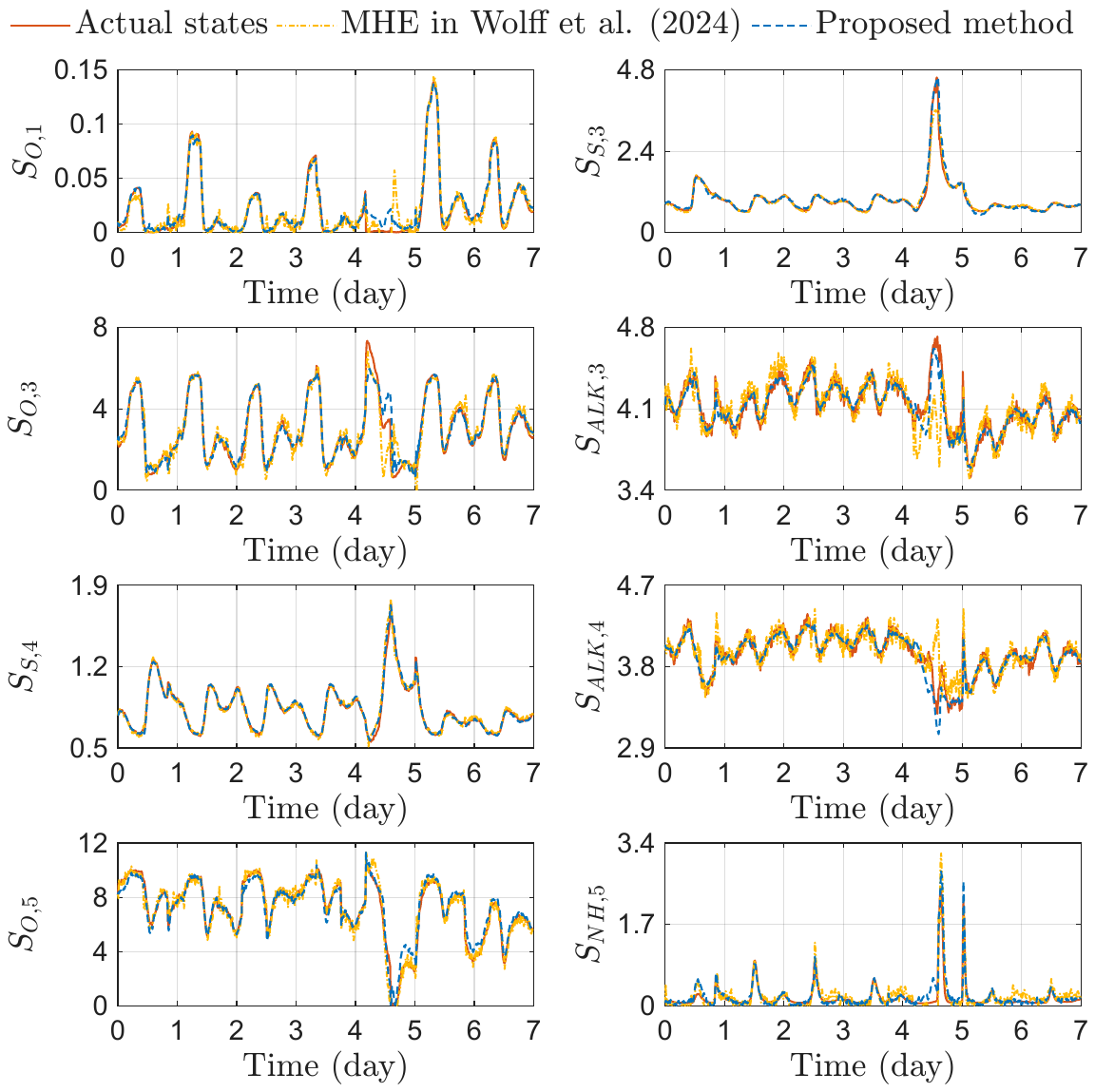}
  \caption{Trajectories of selected actual states and state estimates provided by the proposed data-enabled MHE in \eqref{eq:mhe} and MHE in \cite{wolff2024robust} under dry weather.}\label{Fig.MHE}
\end{figure}

\subsection{Estimation results}
{\color{black} The performance of the proposed method is evaluated using the average root mean square error (ARMSE) of the estimation error in scaled coordinates, as well as the computational time under dry weather conditions. These metrics are evaluated under different estimation horizons $N$, $N=3,\ldots,7$. 
For each estimation horizon, 10 Monte Carlo trials are performed with randomly generated initial state $x_{0}$ and measurement noise to obtain the ARMSEs and computation time. The results averaged over the 10 trials are presented in Table \ref{tbl:comparison_results}.}
We also evaluate the estimation performance of the data-enabled MHE method in \cite{wolff2024robust}, using identical parameters for fair comparison. As shown in Table \ref{tbl:comparison_results}, the proposed method outperforms the MHE approach in \cite{wolff2024robust} in terms of estimation accuracy.
{\color{black}
We note that the proposed method has a higher computation time than the data-enabled MHE method in \cite{wolff2024robust}. This is because the proposed approach applies Willems’ fundamental lemma in the lifted space, which leads to a large-scale optimization problem and an increased computational burden for solving the resulting optimization problem. However, the computation time of the proposed method remains relatively small as compared with the sampling period of the membrane-based biological wastewater treatment process, which is 15 minutes. 
This observation further indicates the applicability of the proposed method for deployment.

The estimation horizon is selected as $N=5$, as it achieves the lowest estimation error while maintaining a reasonable computational burden. A representative comparison of the estimation performance of the proposed data-enabled MHE method and the method in \cite{wolff2024robust} for $N=5$ is presented in Fig. \ref{Fig.MHE}. }

\begin{table}[tttt]
\centering
{\color{black}
\small
\caption{ARMSEs and computation time (s) provided by the proposed data-enabled MHE in \eqref{eq:mhe} and MHE in \cite{wolff2024robust} over 10 trials for different $N$.}
\label{tbl:comparison_results}
\renewcommand{\arraystretch}{1.2}
\begin{tabular}{c|cc|cc}
\Xhline{1px}
\multirow{2}{*}{$N$} & \multicolumn{2}{c|}{ARMSE} & \multicolumn{2}{c}{Computation Time} \\
\cline{2-5}
 & Proposed & Wolff et al. & Proposed & Wolff et al. \\
\hline
3 & 0.2925 & 0.3702 & 3.0027 & 0.8859 \\
4 & 0.2244 & 0.3247 & 4.6446 & 1.3781 \\
5 & 0.1802 & 0.2726 & 6.5830 & 1.9186 \\
6 & 0.1869 & 0.2452 & 9.2886 & 2.6774 \\
7 & 0.2079 & 0.2936 & 11.4764 & 3.5901 \\
\Xhline{1px}
\end{tabular}}
\end{table}

{\color{black}
\begin{rmk}
The trained neural networks used to approximate the lifting function and the scheduling mapping are trained on the training dataset, while they are evaluated on both the validation and test datasets. 
In addition, during the online implementation stage, both the offline dataset used for constructing the Hankel matrices and the online data used for estimation are generated under random input sequences that differ from those used in training. The proposed method maintains good estimation performance, which further demonstrates the generalization capability of the learned lifting function and scheduling mapping.
\end{rmk}
}

{\color{black}
\subsection{Sensitivity analysis}
To evaluate the robustness of the proposed data-enabled MHE approach with respect to measurement noise and disturbances, we conduct a sensitivity analysis by comparing the ARMSEs of the estimation error computed under scaled coordinates obtained by the proposed method. The results are calculated under different levels of measurement noise and disturbances. 
Specifically, we use a parameter $\mu$ to vary the standard deviations of measurement noise and disturbances. The disturbances follow a Gaussian distribution with zero mean and standard deviation of $\mu \times x_{0}$, where $x_{0}$ is the initial state. 
Similarly, the additive measurement noise, which is sampled from the Gaussian distributions with zero mean and standard deviation of $\mu \times x_{0}$ and $\mu \times y_{0}$, is introduced to the state and measured output, respectively.

The resulting  ARMSEs under different values of $\mu$ are presented in Table \ref{tbl:rmse}. The results show that the estimation errors remain low across different noise levels, which indicates the robustness of the proposed method to measurement noise and disturbances. We note that the proposed method remains robust, although such disturbances are not explicitly considered in the estimator design.}

\begin{table}[ttt]
\centering
{\color{black}
\small
\caption{ARMSEs provided by the proposed data-enabled MHE in \eqref{eq:mhe} for different levels of measurement noise and disturbances (i.e., different values of $\mu$).}\label{tbl:rmse}
\begin{tabular}{ccccccc}
\toprule[1pt]
$\mu$ & $10^{-5}$ & $10^{-4}$ & $10^{-3}$ & $ 10^{-2}$  \\
\midrule
Proposed method  & 0.2293  & 0.1849 & 0.2726  & 0.3234 \\
\bottomrule[1pt]
\end{tabular}}
\end{table}

\section{Conclusion}
We proposed a learning-based data-enabled MHE approach for {\color{black}the} nonlinear systems. This approach leverages an LPV Koopman surrogate of the underlying nonlinear system and employs two neural networks to construct the trajectories of this Koopman surrogate directly from system data. A data-enabled MHE was then formulated, which enables the reconstruction of the original nonlinear system states from the state estimates of the Koopman surrogate. The proposed approach does not require explicit system identification and formulates a convex optimization-based MHE design for {\color{black}the} nonlinear systems.
The stability of the proposed method was analyzed. The proposed method provides more accurate estimates than the benchmark data-enabled MHE for a membrane-based biological wastewater treatment process.

{\color{black}
Despite these advantages, the framework still has several limitations. In particular, its theoretical guarantees depend on the controllability and detectability of the Koopman surrogate. However, these conditions cannot be directly verified in practice, because the data-enabled formulation does not require explicit system matrices.
This limitation highlights an important direction for future research, namely, the development of verifiable conditions within the data-enabled framework. In addition, future work will focus on extending the proposed method to more complex wastewater treatment processes to enable real-time estimation of specific contaminant concentrations that are difficult to measure directly.
}

\end{document}